\newcommand{\mytitle}{Higher-Order Specifications\\ for Deductive
  Synthesis of Programs with Pointers (Extended Version)}
\newcommand{\runningtitle}{Higher-Order Specifications for Deductive
  Synthesis of Programs with Pointers}
\title{\mytitle}
\titlerunning{\runningtitle}
\newcommand{\equalcontribution}{Contributed equally to this work.}
\author{David Young\text{*}}{University of Kansas, USA}{d063y800@ku.edu}{https://orcid.org/0009-0006-1193-330X}{}
\author{Ziyi Yang\text{*}}{National University of Singapore, Singapore}{yangziyi@u.nus.edu}{https://orcid.org/0000-0002-8015-7846}{}
\author{Ilya Sergey}{National University of Singapore, Singapore}{ilya@nus.edu.sg}{https://orcid.org/0000-0003-4250-5392}{}
\author{Alex Potanin}{Australian National University, Australia}{alex.potanin@anu.edu.au}{https://orcid.org/0000-0002-4242-2725}{}
\authorrunning{Young et al.}
\keywords{Program Synthesis, Separation Logic, Functional Programming} 
\def\arcr{\@arraycr}
\definecolor{ACMBlue}{cmyk}{1,0.1,0,0.1}
\definecolor{ACMYellow}{cmyk}{0,0.16,1,0}
\definecolor{ACMOrange}{cmyk}{0,0.42,1,0.01}
\definecolor{ACMRed}{cmyk}{0,0.90,0.86,0}
\definecolor{ACMLightBlue}{cmyk}{0.49,0.01,0,0}
\definecolor{ACMGreen}{cmyk}{0.20,0,1,0.19}
\definecolor{ACMPurple}{cmyk}{0.55,1,0,0.15}
\definecolor{ACMDarkBlue}{cmyk}{1,0.58,0,0.21}
\definecolor{PaleGreen}{RGB}{196, 255, 231}
\definecolor{PaleOrange}{RGB}{255, 213, 169}
\definecolor{intnull}{RGB}{213,229,255}
\definecolor{shadecolor}{gray}{1.00}
\definecolor{ddarkgray}{gray}{0.5}
\definecolor{darkgray}{gray}{0.30}
\definecolor{light-gray}{gray}{0.91}
\newcommand{\ie}{\emph{i.e.}\xspace}
\newcommand{\eg}{\emph{e.g.}\xspace}
\newcommand{\etal}{\emph{et~al.}\xspace}
\newcommand{\cf}{\textit{cf.}\xspace}
\newcommand{\tname}[1]{\textsf{#1}\xspace}
\newcommand{\tool}{\tname{Pika}}
\newcommand{\suslik}{\tname{SuSLik}}
\newcommand{\fnLang} {tool\xspace}
\newcommand{\theFnLang} {the tool\xspace}
\newcommand{\fsImpl}{\textbf{\fnLang implementation}}
\newcommand{\genSuSLik}{\textbf{Generated SuSLik}}
\newcommand{\refSuSLik}{\textbf{Reference SuSLik}}
\newcommand{\generate}{\texttt{\%generate}}
\newcommand{\stage} [1] {\textit{\textbf{#1}}}
\newcommand{\Ra} {\Rightarrow}
\newcommand{\step} {\ensuremath{\longmapsto}}
\newcommand{\tstep} {\ensuremath{\Downarrow}}
\newcommand{\defstep} [1] {\ensuremath{\xmapsto{\textnormal{fn-def}}_{#1}}}
\newcommand{\Expr} {\textnormal{Expr}}
\newcommand{\Var} {\textnormal{Var}}
\newcommand{\Loc} {\textnormal{Loc}}
\newcommand{\Val} {\textnormal{Val}}
\newcommand{\FsVal} {\textnormal{FsVal}}
\newcommand{\labinfer} [3] [] {\infer[{\textsc{#1}}]{#2}{#3}}
\newcommand{\fresh} {\mathbin{\#}}
\newcommand{\freshLoc} [1] {#1\;\textnormal{fresh}}
\newcommand{\freshVar} [1] {#1\;\textnormal{fresh}}
\newcommand{\FsStore} {\textnormal{FsStore}}
\newcommand{\VarSet} {V}
\newcommand{\concrete} [1] {\textnormal{\;concrete$_{#1}$}}
\newcommand{\Int}{\texttt{Int}}
\newcommand{\Bool}{\texttt{Bool}}
\newcommand{\sep}{*}
\newcommand{\Pure}{\textnormal{Pure}}
\newcommand{\Spatial}{\textnormal{Spatial}}
\newcommand{\monic}{\ensuremath{\rightarrowtail}}
\newcommand{\layout}{\texttt{layout}}
\newcommand{\lowerS} [2] {\texttt{lower}_{#1}(#2)}
\newcommand{\instantiateS} [2] {\texttt{inst}_{#1}(#2)}
\newcommand{\astRow} [3] {#1 & #2 & #3 & \the\numexpr #3 / #2}
\newcommand{\emp} {\texttt{emp}}
\newcommand{\pointsto} {\ensuremath{\mathbin{\texttt{:->}}}}
\newcommand{\sem} [1] {\llbracket#1\rrbracket}
\newcommand{\Tsem} [2] {\mathcal{T}\sem{#1}_{#2}}
\newcommand{\cond}{\textnormal{cond}}
\newcommand{\Start} [1] {{\color{cyan} #1}}
\newcommand{\End} [1] {{\color{orange} #1}}
\newcommand{\StartSigma} {\Start{\sigma}}
\newcommand{\EndSigma} {\End{\sigma'}}
\newcommand{\EndPair} {(\EndSigma, \End{h'})}
\newcommand{\emptyheap} {\varnothing}
\newcommand{\dom} {dom}
\newcommand{\RQ} [1] {\textbf{RQ#1}}
\definecolor{pblue}{rgb}{0.13,0.13,1}
\definecolor{pgreen}{rgb}{0,0.5,0}
\definecolor{pred}{rgb}{0.7,0,0}
\definecolor{pgrey}{rgb}{0.46,0.45,0.48}
\lstdefinelanguage{SynLang}{
  keywords={new, let, if, else, null, return, while},
  ndkeywords={bool, int, void, loc},
  mathescape=true,
  showspaces=false,
  showtabs=false,
  breaklines=true,
  showstringspaces=false,
  breakatwhitespace=true,
  lineskip=-0.9pt,
  morecomment=[l]{//}, 
  morecomment=[s]{/*}{*/}, 
  basewidth={0.54em, 0.4em},%
  basicstyle=\footnotesize\ttfamily,
  keywordstyle={\color{pred}\ttfamily\bfseries},
  ndkeywordstyle={\color{pblue}\ttfamily\bfseries},
  commentstyle={\color{ccomment}\itshape},
  numbers=none,
}
\newcommand{\pts}{\mapsto}
\newcommand{\code}[1]{\lstinline[language=SynLang,basicstyle=\small\ttfamily,mathescape=true]{#1}}
\newcommand{\rulename}[1]{\textsc{#1}}
\newcommand{\func}{\textit{func}\xspace}
\newcommand{\writer}{\rulename{Write}\xspace}
\newcommand{\funcwrite}{\rulename{Funcwrite}\xspace}
\newcommand{\tempfuncalloc}{\rulename{Tempfuncalloc}\xspace}
\newcommand{\tempfuncfree}{\rulename{Tempfuncfree}\xspace}
\newcommand{\mcode}[1]{{\ensuremath{\tt #1}}}
\newcommand{\vars}[1]{\mathsf{Vars}\left({#1}\right)}
\newcommand{\set}[1]{\left\{{#1}\right\}}
\newcommand{\bl}[1]{{\color{pblue}{#1}}}
\newcommand{\asn}[1]{{\bl{\set{#1}}}}
\newcommand{\trans}[3]{\left.\bl{#1} \!\leadsto\! \bl{#2} \right| #3}
\newcommand{\prog}{c}
\newcommand{\osep}{\ast}
\newcommand{\env}{\Gamma}
\newcommand{\deref}[1]{*{#1}}
\newcommand{\iann}[1]{\mcode{ {#1}}} 
\newcommand{\writeontop}[2]{\mathrel{\stackrel{\makebox[0pt]{\mbox{\tiny{#2}}}}{#1}}}
\newcommand{\ispointsto}[3]{#1{\writeontop{\ \!\pts\ \!}{\iann{#3}}}{#2}} 
\newcommand{\ruleSAdd} {
  \labinfer[S-Add]{(e_1 + e_2, \VarSet_0) \tstep (v == v_1 + v_2 \land p_1 \land p_2, s_1 \sep s_2, \VarSet_2 \cup \{v\}, v) }
    {(e_1, \VarSet_0) \tstep (p_1, s_2, \VarSet_1, v_1) &
     (e_2, \VarSet_1) \tstep (p_2, s_2, \VarSet_2, v_2) &
     \freshVar{v}
    }}
\begin{document}

\maketitle
\renewcommand{\thefootnote}{\fnsymbol{footnote}}
\footnotetext[1]{\equalcontribution}
\renewcommand{\thefootnote}{\arabic{footnote}}

\begin{abstract}
  Synthetic Separation Logic (SSL) is a formalism that powers \suslik,
the state-of-the-art approach for the deductive synthesis of
provably-correct programs in C-like languages that manipulate
heap-based linked data structures.
Despite its expressivity, SSL suffers from two shortcomings that
hinder its utility.
First, its main specification component, inductive predicates, only
admits \emph{first-order} definitions of data structure shapes, which
leads to the proliferation of ``boiler-plate'' predicates for
specifying common patterns.
Second, SSL requires \emph{concrete} definitions of data structures to
synthesise programs that manipulate them, which results in the need
to change a specification for a synthesis task every time changes are
introduced into the layout of the involved structures.


We propose to significantly lift the level of abstraction used in
writing Separation Logic specifications for synthesis---both
simplifying the approach and making the specifications more usable and
easy to read and follow. 
We avoid the need to repetitively re-state low-level representation
details throughout the specifications---allowing the reuse of
different implementations of the same data structure by abstracting
away the details of a specific layout used in memory. Our novel
\textit{high-level front-end language} called~\tool significantly improves the expressiveness of \suslik.

We implemented a layout-agnostic synthesiser from \tool to \suslik
enabling push-button synthesis of C programs with in-place memory
updates, along with the accompanying full proofs that they meet
Separation Logic-style specifications, from high-level specifications
that resemble ordinary functional programs.
Our experiments show that our tool can produce C code that is
comparable in its performance characteristics and is sometimes faster
than Haskell.

\end{abstract}


\section{Introduction}

Recent advances in program synthesis have allowed programmers to
concentrate on stating precise specifications---leaving the job of
generating provably correct and efficient imperative code to the
synthesiser, such
as~\suslik~\cite{polikarpova:2019:suslik,itzhaky:2021:cyclic-synth,WatanabeGPPS21}.
Such specifications are usually expressed using (Synthetic) Separation
Logic~\cite{ohearn:2001:seplogic,reynolds:2002:seplogic} that while
hugely successful in verifying properties of pointer-manipulating
programs remains out of reach to many mainstream developers.
As the programs grow in complexity, such SSL specifications can become
exceedingly verbose and complex---making the job of specification
writer especially error-prone and defeating the purpose of a
\textit{usable proof automation toolchain}.

The power of Separation Logic (SL) specifications for the tasks of both
verification and synthesis, is its mechanism of \emph{inductive
  predicates} that concisely capture the shape of possibly recursive
pointer-based tree-like data structures, determining both induction
schemes for verification and the shape of recursion for the synthesis
tasks of the programs that manipulate such data
structures~\cite{reynolds:2002:seplogic}.
The surprising ability of SL specifications to capture precisely the
logic of a desired program to be synthesised in the logical assertions
comes at the price of the involved inductive predicates being (a)
\emph{first-order} and (b) somewhat \emph{low-level}, with both these
aspects posing limitations to the usability of SL-based program
synthesis.

The first-order nature of the predicates means, for instance, that
synthesis tasks that involve several data structures with very similar
heap layouts would require to use \emph{different} predicate
definitions.
As a specific example, consider a task synthesising two functions,
\code{f} and \code{g}. Both \code{f} and \code{g} take as an argument
a pointer to a linked list of integers; \code{f} increments all
elements of the list by one, while \code{g} multiplies all its
elements by two.
To specify these two tasks, the state-of-the-art tools for program
synthesis based on SL specifications require the user to provide, in
addition to the pre-/postconditions, \emph{three} different inductive
predicates: one for an arbitrary list, another for a list that carries
a known payload, with each element incremented by one, and the final
capturing the multiplication of each element by two.

The second aspect, \ie, the low-level nature of the SL inductive
predicates used for synthesis, shows up when we try to \emph{rewrite}
an already specified synthesis task for a data structure with a slightly
different layout.
As an example, imagine defining the task of concatenating two lists.
It is natural to expect that the specification will look very similar
for both singly and doubly linked lists. Yet, since those are two
different structures, with two different layouts, the user would
require to supply two different task specifications.

A seasoned programmer would immediately notice that both issues
(a)~and (b)~very much resemble the struggle that one faces
when programming in a language that does not provide certain
abstractions, which are nowadays mostly taken for granted:
\emph{higher-order functions} and \emph{abstract data types}.
Our first example could be streamlined should the SL-based
specification language offer a way to define a function similar to
\code{List.map}, available in all popular functional programming
languages, so it could be used to concisely express the two scenarios
of manipulating with the payload of a list's elements.
The second example would benefit from the ability to specify
concatenation of \emph{abstract} lists, while separately handling
manipulations with single- or doubly-linked lists in terms of their
memory \emph{layouts}.


\subsubsection*{Key Ideas}
\label{sec:key-ideas}

The two challenges faced by the SSL specification language for the
synthesis of heap-manipulating programs---the need for higher-order
functions and abstract data types---have provided the primary
motivation for this work.

As a solution, we developed \tool: a high-level front-end language and
specification translation framework built on to of the
state-of-the-art SL-based program synthesiser \suslik.\footnote{ Both
  susliks and pikas are Central Asian mammals. Pikas might look
  similar to susliks, but are more nimble and have longer life
  expectancy. }
\suslik is based on a variant of separation logic called
\textit{synthetic separation logic} or
SSL~\cite{polikarpova:2019:suslik}.
\tool has a syntax similar to popular functional programming languages
and features specification-level higher-order functions on Algebraic
Data Types (ADTs). In addition to being more succinct in comparison to
Synthetic Separation Logic (SSL), the specification formalism of
\suslik, \tool also addresses its reusability issues outlined above.
First, the use of specification-level higher-order functions allows
the user to abstract over the specific properties of the payloads of
the heap-based data structures, thus, generalising existing inductive
predicates for so they could be employed in a wider range of synthesis
tasks.
Second, by manipulating ADTs, the synthesis specifications do not need
to deal with the specific memory layouts of the data structures. This
separates the specification of the tasks that operate on those data
structures from the low-level details of their memory representations.

\begin{figure}[t]
  {\footnotesize
\[\begin{tikzcd}
	{\fbox{\textnormal{\tool code}}} & {\fbox{\suslik specification}} &
  {\fbox{\suslik synthesis tree}} & {\fbox{\textnormal{C code + SL proof}}}
	\arrow[from=1-1, to=1-2]
	\arrow[from=1-2, to=1-3]
	\arrow[from=1-3, to=1-4]
\end{tikzcd}\]
  \caption{\tool translation pipeline}
  \label{fig:pipeline}
  }
\end{figure}

The \tool pipeline is depicted in \autoref{fig:pipeline}.
As our goal was to extend the expressivity of \suslik by giving it a
high-level specification language while retaining its meta-theoretical
guarantees (\ie, the certifiable correctness of the programs it
synthesises), we had to overcome several technical obstacles when
designing \tool and implementing it on top of \suslik.
First, we had to formally define the operational semantics of \tool
and its translation to SSL-based specifications of \suslik, defining
the corresponding soundness result, relating the behaviour of programs
eventually synthesised to that of their high-level counterparts
(\autoref{sec:formal}).
Second, to support the higher-order specifications of \tool, we
introduced conservative extensions to \suslik's specification language
as well as to its deductive synthesis rules, to make it capable of
handling pre-/postconditions with first-class functions operating on
the payload of heap-stored data types (\autoref{sec:extensions}).

\subsubsection*{\tool as a Programming Language}
\label{sec:benefits}

Given the close similarity of \tool, our new specification language,
to general-purpose functional programming languages, such as Haskell,
it is natural to wonder whether it's possible to leverage its underlying
synthesis pipeline as a way to produce efficient imperative programs
in a language such as C from equivalent high-level functional
programs.
In other words, if we decide to use a combination \tool + \suslik as a
\emph{compiler}, would it be a viable replacement to many-decades old
tools such as Glasgow Haskell Compiler (GHC)~\cite{HallHPJW92}, for
producing efficient runnable code?
To answer this question, we have conducted evaluation on several list-
and tree-manipulating benchmarks, comparing the performance of
verified C code, emitted by \suslik from \tool specifications, to that
of executables produced for equivalent tasks by GHC.

Our preliminary results are encouraging: thanks to avoiding
unnecessary allocation and using destructive heap updates whenever
possible, the C code synthesised by \tool + \suslik outperforms the
GHC output (with compiler optimisations flags turned) in the majority
of list-manipulating benchmarks we've tried; our synthesis tool also
produces strictly more performant C code for tree-manipulating
benchmarks when compared to the corresponding Haskell programs,
compiled by GHC without or with optimisations
(\autoref{sec:evaluation}). In particular, we specifically observe
this when we compare C code synthesised by \tool + \suslik with no C
compiler optimisation flags to GHC compiled code with no GHC compiler
flags.

\subsubsection*{Contributions}
\label{sec:contribs}

In this work, we make the following contributions:

\begin{itemize}
\item We address the expressivity limitations of SSL, the
  specification formalism of the state-of-the-art deductive synthesis
  tool \suslik by developing \tool---a high-level specification
  language with higher-order functions and abstract data types.

\item We formally define the operational semantics of \tool and prove
  the soundness of translation from \tool to pre-/postconditions in
  SSL.

\item We develop an extension to \suslik's specification and synthesis
  mechanism that enables translation from \tool specifications
  featuring first-class functions.

\item We observe that the synthesis tool resulting from the
  combination \tool + \suslik enables certified memory-layout-agnostic
  compilation from a functional specification to C code.

  \item We report on the evaluation of the \tool regarding its
    expressiveness and performance. In particular, we show that the C
    code it produces frequently outperforms equivalent Haskell
    programs compiled by GHC.
\end{itemize}

\section{Overview}

\subsection{Background}

The \tool language is translated into
\suslik~\cite{polikarpova:2019:suslik}, which is a program synthesis
tool that uses Synthetic Separation Logic (SSL)-a variant of
Hoare-style~\cite{hoare:1969:axiomatic} Separation Logic
(SL)~\cite{ohearn:2001:seplogic}.


A synthesis task specification in \suslik is given as a function
signature together with a pair of pre- and post-conditions, which are
both SL assertions~\cite{polikarpova:2019:suslik}.
The synthesiser generates code that satisfies the given specification,
along with the SL \emph{proof} of its correctness by searching in a
space of proofs that can be derived by using the rules of the
underlying logic~\cite{WatanabeGPPS21}.
A distinguishing feature of \textit{Synthetic} Separation Logic is the
format of its assertions. An SSL assertion consists of two parts: a
\textit{pure part} and a \textit{spatial part}. The pure part is a
Boolean expression constraining the variables of the specification
using a few basic relations, like equality and the less-than relation
for integers. The spatial part is a \textit{symbolic heap}, which
consists of a list of \emph{heaplets} separated by the $\sep$ symbol.

Each heaplet takes on one of the following
forms~\cite{polikarpova:2019:suslik}:

\begin{itemize}
  \item \texttt{emp}: This represents the empty heap. It is also the left and right identity for $\sep$.
  \item $\ell \mapsto a$: This asserts that memory location $\ell$ points to the value $a$. It also asserts that the location $\ell$ is accessible.
  \item $[\ell, \iota]$: At memory location $\ell$, there is a block of size $\iota$.
  \item $p(\overline{\phi})$: This is an application of the
    \textbf{\textit{inductive predicate}} $p$ to the arguments
    $\overline{\phi}$. An inductive predicate has a collection of
    branches, guarded by Boolean expressions (conditions) on the
    parameters. The body of each branch is an SSL assertion. The
    assertion associated with the first branch condition that is
    satisfied is used in place of the application. Note that inductive
    predicates can be (and often are) recursively defined.
\end{itemize}

\noindent
The general form of an SSL assertion is
$(p; h_1 \sep h_2 \sep \cdots \sep h_n)$, where $p$ is the pure part
and $h_1, h_2, \cdots, h_n$ are the heaplets which are the conjuncts
that make up a separated conjunction of the spatial part. $\sep$ is \textit{separating conjunction}: $h_1 \sep h_2$ means that
the heaplets $h_1$ and $h_2$ apply to disjoint parts of the heap. A syntax
definition for SSL is given in \autoref{fig:SSL-syntax}, which is
adapted from \textit{Cyclic Program Synthesis} by Itzhaky
\etal~\cite{itzhaky:2021:cyclic-synth}. We will also use the symbol \verb|**| for separating conjunction and the symbol \verb|:->| for $\mapsto$.

\begin{figure}[t]
\setlength{\abovecaptionskip}{5pt}
{\small{
\[
  \begin{array}[t]{ l l }
    \text{Variable} &x, y\;\; \text{Alpha-numeric identifiers $\in$ \textnormal{Var}}\\
    \text{Size, offset} &n, \iota\;\; \text{Non-negative integers}\\
    \text{Expression} &e ::= 0\; \mid \texttt{true} \mid x \mid e = e \mid e \land e \mid \neg e \mid d\\
    \text{$\mathcal{T}$-expr.} &d ::= n \mid x \mid d + d \mid n \cdot d \mid \{\} \mid {d} \mid \cdots\\
    \text{Command} &c ::= \begin{aligned}[t] &\texttt{let x = *(x + }\iota\texttt{)} \mid \texttt{*(x + }\iota\texttt{) = e} \mid
      \\
      &\texttt{let x = malloc(n)} \mid \texttt{free(x)} \mid \texttt{error}\newline
      \mid \texttt{f(}\overline{e_i}\texttt{)}
      \end{aligned}\\
    \text{Program} &\Pi ::= \overline{f(\overline{x_i})\; \{\; c\; \}\; ;}\; c\\
    \text{Logical variable} &\nu, \omega\\
    \text{Cardinality variable} &\alpha\\
    \text{$\mathcal{T}$-term} &\kappa ::= \nu \mid e \mid \cdots\\
    \text{Pure logic term} &\phi, \psi, \chi ::= \kappa \mid \phi = \phi \mid \phi \land \phi \mid \neg \phi\\
    \text{Symbolic heap} &P, Q, R ::= \texttt{emp} \mid \mbox{$\langle e, \iota \rangle \mapsto e \mid [e, \iota]$} \mid p^{\alpha}(\overline{\phi_i})
      \mid \mbox{$P * Q$}\\
    \text{Heap predicate} &\mathcal{D} ::= p^{\alpha}(\overline{x_i}) : \overline{e_j \Ra \exists \overline{y}.\{\chi_j;R_j\}}\\
    \text{Assertion} &\mathcal{P},\mathcal{Q} ::= \{\phi; P\}\\
    \text{Environment} &\Gamma ::= \forall\overline{x_i}.\exists\overline{y_j}.\\
    \text{Context} &\Sigma ::= \overline{\mathcal{D}}\\
    \text{Synthesis goal} &\mathcal{G} ::= P \leadsto Q
  \end{array}
\]
}}
\caption{Syntax of Synthetic Separation Logic}
  \label{fig:SSL-syntax}
\end{figure}


As the specification language of \suslik, SSL serves as the
compilation target for the \tool language. From there, executable
programs are generated through \suslik's program synthesis. Consider a
program that takes an integer $x$ and a result in location $r$ and
stores $x+1$ at location $r$. This can be written as the \suslik
specification:

\begin{lstlisting}
void add1Proc(int x, loc r)
  { r :-> 0 }
  { y == x + 1 ; r :-> y }
{ ?? }
\end{lstlisting}

\noindent
This example can be written as follows in our \fnLang:

\begin{lstlisting}
add1Proc : Int -> Int;
add1Proc x := x + 1;
\end{lstlisting}

\noindent
In contrast with \suslik spec, the \tool one requires no direct
manipulation with pointers.




\subsection{The \tool Language}
\label{sec:language}


While SSL provides a specification language that allows tools like
\suslik to synthesise code, it is only able to express specifications
as pointers. This is useful for some applications, such as embedded
systems, but it does not provide any high-level abstractions. As a
result, every part of a specification is tailored to a specific memory
representation of each data structure involved.
To address this shortcoming, we introduce a language with algebraic
data types that gets translated into SSL specifications. Additionally,
we introduce a language construct that allows the programmer to
specify a \emph{memory representation} of an algebraic data type
called a \textit{layout}. 
The distinction between algebraic data types and layouts provides the
separation of concerns between the low-level representation of a data
structure and code that manipulates it at a high level.

Syntactically, \tool resembles a functional programming language
of the Miranda~\cite{turner:1986:miranda} and
Haskell~\cite{hudak:2007:haskell} lineage. It supports algebraic data
types, pattern matching at top-level function definitions (though not
inside expressions) and Boolean guards. The primary difference arises
due to the existence of layouts and the fact that the language is
compiled to an SSL specification rather than executable code. Beyond
algebraic data types and layouts, Pika has a built-in type for integers
as well as Booleans.

Functions in \tool are only defined by their operations on algebraic
data types. Thus, all function definitions are ``layout-polymorphic''
over the particular choices of layouts for their arguments and result.
Giving a layout polymorphic function, a particular choice of layouts
is called ``instantiation''. Specifying the layout of a non-function
value is called ``lowering.''



The code generator is instructed to generate a \suslik specification
for a certain function at a certain instantiation by using a
\generate{} directive. For example, if there is a function definition
with the type signature \verb|mapAdd1 : List -> List|, a line reading
\verb|%generate mapAdd1 [Sll] Sll| would instruct the
\tool compiler to generate the \suslik inductive predicate
corresponding to \verb|mapAdd1| instantiated to the \verb|Sll| layout
for both its argument and its result.
An example of an ADT definition and a corresponding layout definition
is given in \autoref{fig:List-def}. There is one unusual part of the
syntax in particular that requires further explanation: layout type
signatures. A layout definition consists of a \textit{layout type
  signature} and a pattern match (much like a function definition),
with lists of SSL heaplets on the right-hand sides. A layout type
signature has a special form $A : \alpha \monic \layout[x]$. This says
that the layout $A$ is for the algebraic data type $\alpha$ and the
SSL output variable $x$ denoting the ``head'' pointer of the
respective structure. 


\begin{figure}[t]
\begin{lstlisting}
data List := Nil | Cons Int List;

Sll : List >-> layout[x];
Sll (Nil) := emp;
Sll (Cons head tail) := x :-> head, (x+1) :-> tail, Sll tail;
\end{lstlisting}
  \caption{\lstinline{List} algebraic data type together with its singly-linked list layout \lstinline{Sll}}
  \label{fig:List-def}
\end{figure}

\subsection{\tool by Example}

We demonstrate the characteristic usages of \tool by a series of
examples.
In these examples, we will often make use of the \verb|List| algebraic
data type and its \verb|Sll| layout from~\autoref{fig:List-def}. A
simple example of \tool code that illustrates algebraic data types and
layouts is a function which creates a singleton list out of the given
integer argument:

\begin{lstlisting}
%generate singleton [Int] Sll

singleton : Int -> List;
singleton x := Cons x (Nil);
\end{lstlisting}

\noindent
This gets compiled to the following \suslik specification (modulo
auto-generated names):

\begin{lstlisting}
predicate singleton(int p, loc r) {
| true => { r :-> p ** (r+1) :-> 0 ** [r,2] }
}
\end{lstlisting}

\noindent
A slightly more complicated example comes from trying to write a
functional-style \verb|map| function directly in \suslik. Consider a
function which adds 1 to each integer in a list of integers.
Considering the list implementation to be a singly-linked list with a
fixed layout, one way to express this in \suslik is shown in \autoref{fig:sllx}. In the
\verb|mapAdd1| predicate, the input list is given as the \verb|x| parameter and the \verb|r|
parameter points to the output list. In the non-null case, the head of \verb|r| is required
to be the successor of the head of \verb|x|. The predicate is then applied recursively to the tails.

\begin{figure}[t]
\begin{lstlisting}
predicate sll(loc x) {
| x == 0 => { emp }
| not (x == 0) => { [x, 2] ** x :-> v ** (x+1) :-> nxt ** sll(nxt) }
}

predicate mapAdd1(loc x, loc r) {
| x == 0 => { emp }
| not (x == 0) => { [x, 2] ** x :-> v ** (x+1) :-> xNxt ** [r, 2]
    ** r :-> (v+1) ** (r+1) :-> rNxt ** mapAdd1(xNxt, rNxt) } }

void mapAdd1_fn(loc x, loc y)
  { sll(x) ** y :-> 0 }
  { y :-> r ** mapAdd1(x, r) }
{ ?? }
\end{lstlisting}
\caption{Specifying a function that adds one to each element of a
  singly-linked list in \suslik.}
\label{fig:sllx}
\end{figure}

Note that inductive predicates are used for \textit{two} different purposes: the \verb|sll| inductive
predicate describes a singly-linked list data structure, while the \verb|mapAdd1| inductive predicate 
describes how the input list relates to the output list. Both are used in the specification of
\verb|mapAdd1_fn|: \verb|sll| in the precondition and \verb|mapAdd1| in the postcondition.

Using the \verb|mapAdd1| inductive predicate gives us two advantages over attempting to
put the SSL propositions directly into the postcondition of \verb|mapAdd1_fn|:
\begin{enumerate}
  \item We are able to express a conditional on the shape of the list. This is much like pattern
    matching in a language with algebraic data types, but we are examining the pointer involved directly.
  \item We are able to express \textit{recursion} part of the postcondition: the \verb|mapAdd1| inductive
    predicate refers to itself in the \verb|not (x == 0)| branch.
\end{enumerate}
These two features are both reminiscent of features common in
functional programming languages: pattern matching and recursion.
However, there are still some significant differences:

\begin{itemize}
  \item In traditional pattern matching, the underlying memory representation of the data structure is not exposed.
  \item Compared to a functional programming language, the meaning of the specification is more obscured. It is
    necessary to think about the structure of the linked data structure to determine what the specification is saying. This
    is related to the first point: The memory representation is front-and-center.
  \item In many functional languages, mutation is either restricted or generally discouraged. In \suslik, mutation is commonplace.
\end{itemize}

Suppose we want to write the functional program that corresponds to
this specification. One way to do this in a Haskell-like language is
by using the \verb|List| type from \autoref{fig:List-def}.

\begin{lstlisting}
mapAdd1_fn : List -> List;
mapAdd1_fn (Nil) := 0;
mapAdd1_fn (Cons head tail) := Cons (head + 1) (mapAdd1_fn tail);
\end{lstlisting}

\noindent
The only missing information is the memory representation of the \verb|List| data structure. We do not want the
\verb|mapAdd1_fn| implementation to deal with this directly, however. We want to separate the more abstract notions
of pattern matching and constructors from the concrete memory layout that the data structure has.

To accomplish this, we now extend the code with the definition of
\verb|Sll| from \autoref{fig:List-def}. \verb|Sll| is a
\textit{layout} for the algebraic data type \verb|List|.
Now we have all of the information of the original specification but
rearranged so that the low-level memory layout is separated from the
rest of the code. This separation brings us to an important
observation about the language, manifested throughout these examples:
none of the function definitions need to \emph{directly} perform any
pointer manipulations. This is relegated entirely to the reusable
layout definitions for the ADTs. The examples are written entirely as
recursive functions that pattern match on, and construct, ADTs.

All that is left is to connect these two parts: the layouts and the
function definitions. We instruct a \suslik specification generator to
generate a \suslik specification from the \verb|mapAdd1_fn| function
using the \verb|Sll| layout:
\begin{lstlisting}
%generate mapAdd1_fn [Sll] Sll
\end{lstlisting}
The \verb|[Sll]| part of the directive tells the generator which
layouts are used for the arguments. In this case, the function only
has one argument and the \verb|Sll| layout is used. The \verb|Sll| at
the end specifies the layout for the result.

\subsubsection{Synthesising the in-place map function}

We can generalise our \verb|mapAdd1| to map arbitrary \verb|Int|
functions over a list and then redefine \verb|mapAdd1| using the new
\verb|map|.

\begin{lstlisting}
%generate mapAdd1 [Sll] Sll

data List := Nil | Cons Int List;

Sll : List >-> layout[x];
Sll (Nil) := emp;
Sll (Cons head tail) := x :-> head, (x+1) :-> tail, Sll tail;

map : (Int -> Int) -> List -> List;
map f (Nil) := Nil;
map f (Cons x xs) := Cons (instantiate [Int] Int f x) (map f xs);

add1 : Int -> Int;
add1 x := x + 1;

mapAdd1 : List -> List;
mapAdd1 xs := instantiate [Int -> Int, Sll] Sll map add1 xs;
\end{lstlisting}

The keyword \lstinline{instantiate} gives specific layouts to use for the types in function applications, \emph{e.g.}, if a function \verb|g| has type \verb|A -> B -> C -> D|, then
\lstinline{instantiate [L1, L2, L3] L4 g x y z} will use layout \verb|L1| for type \verb|A|, \verb|L2| for type \verb|B|, \verb|L3| for type \verb|C| and, finally, \verb|L4| for the result type \verb|D|, while applying the function \verb|g| to the three arguments \verb|x|, \verb|y| and \verb|z|.

This example makes use of \lstinline{instantiate} in two places.
In the first case where we have the call
\lstinline{instantiate [Int] Int f x}, the builtin \verb|Int| layout is used for both the input
and output. In this special case, the \verb|Int| layout shares a name with the \verb|Int| type
that it represents. This is necessary since \lstinline{instantiate} is used for all non-recursive calls that are not constructor applications.

In the second use, \lstinline{instantiate [Int -> Int, Sll] Sll map add1 xs}, we specify that
the second argument uses the \verb|Sll| layout for the \verb|List| type from \autoref{fig:List-def}. We also give \verb|Sll| as
the layout for the result of the call. Note that it is not necessary to use \verb|instantiate| for the recursive call to \verb|map|. This is because the appropriate layout
is inferred for recursive calls.

The type signature of \verb|mapAdd1| implies that it is layout polymorphic, as the type does not refer to any specific layout. It might be surprising that \verb|instantiate| is required in the body of \verb|mapAdd1| since the type signature
of \verb|mapAdd1| suggests that it is layout polymorphic and yet we must pick a specific \verb|List| layout when
we use \verb|instantiate| to call \verb|map|. This is because, in general, a call inside the body of some function \verb|fn| might use
any layout, even layouts that have no relation to the layouts that \verb|fn| is instantiated to. Finally, please note that our benchmarks shown in Figure~\ref{fig:size-comparison} include a more general version of \verb|mapAdd|.

\subsubsection{Guards}

While we have a pattern-matching construct at the top level of a function definition, we
have not seen a way to branch on a Boolean value so far. This is a feature that is
readily available at the level of \suslik, since the same conditional construct we
use to implement pattern matching can also use other Boolean expressions.

We can expose this in the functional language using a \textit{guard},
much like Haskell's guards. Suppose we want to write a specialised
filter-like function. Specifically, we want a function that filters
out all elements of a list that are less than 9. This is a specific
example where the \suslik specification is noticeably more difficult
to read. For a \suslik specification of this example, see
\autoref{fig:suslik-filter}

\begin{figure}
  \begin{lstlisting}
predicate filterLt9(loc x, loc r) {
| (x == 0) => { r == 0 ; emp }
| not (x == 0) && head < 9 =>
    { x :-> head ** (x+1) :-> tail ** [x,2] ** filterLt9(tail, r) }
| not (x == 0) && not (head < 9) =>
    { x :-> head ** (x+1) :-> tail ** [x,2] ** filterLt9(tail, y)
      ** r :-> head ** (r+1) :-> y ** [r,2] } }

void filterLt9(loc x1, loc r)
  { Sll(x1) ** r :-> 0 }
  { filterLt9(x1, r0) ** r :-> r0 }
{ ?? }
  \end{lstlisting}
  \caption{\suslik specification of \lstinline{filterLt9}, excluding \lstinline{Sll} which is given in \autoref{fig:List-def}}
  \label{fig:suslik-filter}
\end{figure}

On the other hand, an implementation of this in \tool is:

\begin{lstlisting}
%generate filterLt9 [Sll] Sll

filterLt9 : List -> List;
filterLt9 (Nil) := Nil;
filterLt9 (Cons head tail)
  | head < 9       := filterLt9 tail;
  | not (head < 9) := Cons head (filterLt9 tail);
\end{lstlisting}

\noindent
When translating a guarded function body, the translator takes the conjunction of the Boolean guard condition with
the condition for the pattern match. Finally, please note that our benchmarks shown in Figure~\ref{fig:size-comparison} include a more general version of \verb|filterLt|.

While the SuSLik version of \verb|filterLt9| requires working with pointers directly, the Pika version uses pattern matching and constructor application.
This allows the Pika code to work independent of the layout used.

\subsubsection{\texttt{if-then-else}}

Another feature that is common in functional languages is
\verb|if-then-else| expressions. This has a straightforward translation into
\suslik.
The \verb|if-then-else| construct corresponds to \suslik's C-like
ternary operator. We can use this feature to implement the \verb|even|
function which produces $1$ is the argument is even and $0$ otherwise.

\begin{lstlisting}
%generate even [Int] Int

even : Int -> Int;
even (n) := if (n % 2) == 0 then 1 else 0;
\end{lstlisting}

\subsubsection{Using multiple layouts}

To show the interaction between multiple algebraic data types, we write a
function that follows the left branches of a binary tree and collects the values
stored in those nodes into a list. This example demonstrates a binary tree
algebraic data type and a layout that corresponds to it.

\begin{lstlisting}
%generate leftList [TreeLayout] Sll

data Tree := Leaf | Node Int Tree Tree;

TreeLayout : Tree >-> layout[x];
TreeLayout (Leaf) := emp;
TreeLayout (Node payload left right) := x :-> payload, (x+1) :-> left,
  (x+2) :-> right, TreeLayout left, TreeLayout right;

leftList : Tree -> List;
leftList (Leaf) := Nil;
leftList (Node a b c) := Cons a (leftList b);
\end{lstlisting}

\subsubsection{Synthesising fold}
\label{sec:examples-fold}

A fold is a common kind of operation on a data structure in functional programming, where a binary function
is applied to the elements of a data structure to create a summary value. For example, if the binary function
is the addition function, it will give the sum of all the elements of the data structure. The
classic example of such a fold is a fold on a list. In this example, we will write a
right fold over a \verb|List|.

\begin{lstlisting}
%generate fold_List [Int, Sll] Int

fold_List : Int -> List -> Int;
fold_List z (Nil) := z;
fold_List z (Cons x xs) :=
  instantiate [Int, Int] Int f x (fold_List z xs);
\end{lstlisting}

\noindent
We will specifically look at the specialization where we use the addition function for \verb|f| so that we can focus on way that layouts are used in the translation. This sort of specialization corresponds to defunctionalization. The compiler produces the following \suslik specification for \verb|fold_List|:

\begin{lstlisting}
predicate fold_List(int i1, loc x, int i2) {
  | x == 0 => { i2 == i1 ; emp }
  | not (x == 0) => { (zz4 == i1) && ((zz5 == nxt13) &&
     (i2 == (h + b3))) ; [x,2] ** x :-> h ** (x + 1) :-> nxt13 **
     fold_List(zz4, zz5, b3) } }
\end{lstlisting}

The first two parameters of the \suslik predicate correspond to the two arguments of the Pika function. The final
parameter of the predicate, \verb|i2|, corresponds to the output of the Pika function. There are two cases:

\begin{enumerate}
  \item The \verb|x == 0| case corresponds to the \verb|Nil| case in \tool. In this case, the pure part of
    the assertion (to the left of the semicolon) requires that the output is equal to the first parameter.
    This is because the Pika function returns the first parameter in its \verb|Nil| case.

  \item The \verb|not (x == 0)| case corresponds to the \verb|Cons| case. First, let's look at the spatial part (this is
    everything to the \textit{right} of the semicolon). The pattern match destructures the \verb|Cons| into its head and tail.
    Likewise, in the spatial part of the SuSLik predicate, we require that \verb|x| points to \verb|h| (the head) and \verb|x + 1|
    points to \verb|nxt13| (the tail). We also recursively call the predicate on the tail. The pure part does two things:
    it introduces new names for things (these are used internally) and it requires that the output \verb|i2| is the sum of the head (\verb|h|)
    and the value obtained from the output of the recursive call (\verb|b3|).
\end{enumerate}




\section{Formal Semantics of \tool}
\label{sec:formal}

In this section, we have the following plan:

\begin{itemize}
  \item We define abstract machine semantics for executing a subset of \tool programs. This semantics
    is given by the big-step relation $\step$ which we define later.
  \item We define the translation from that subset of \tool into SSL. This translation is given by the
    function $\Tsem{e}{V,r}$ from \tool expressions into SSL propositions. The $r$ is a variable name to
    be used in the resulting proposition and $V$ is a collection of fresh names.
  \item We prove a soundness theorem. Given any well-typed expression $e$ and an abstract machine reduction producing
    the store-heap pair $\EndPair$, the SSL translation of $e$ should be satisfied by SSL model $\EndPair$. This is stated formally, and proven,
    in \autoref{thm:gen-soundness}.
\end{itemize}

This subset of \tool does not have guards or conditional expressions, but it does have pattern
matching. It also has the requirement that functions can only have one argument. Unlike the implementation, there
is no elaboration. As a result, every algebraic data type value must be lowered to a specific layout at every usage
and every function application must be explicitly instantiated with a layout for the argument and a layout for the result. We
also limit the available integer and Boolean operations for brevity.

The grammar for this subset is given in \autoref{fig:subset-grammar}. The grammar for types, layout definitions and algebraic data type definitions remain the same as before and are therefore omitted. $\instantiateS{A,B}{f}$ corresponds to \verb|instantiate [A] B f|. We also include another construct, $\lowerS{A}{C\; e_1 \cdots e_n}$. This says to use the specific layout $A$ for the given constructor application $C\; e_1 \cdots e_n$.

\begin{figure}
\begin{grammar}
  <i> ::= $\cdots$ $\mid$ -2 $\mid$ -1 $\mid$ 0 $\mid$ 1 $\mid$ 2 $\mid$ $\cdots$

  <b> ::= \texttt{true} $\mid$ \texttt{false}

  <e> ::= <var> $\mid$ <i> $\mid$ <b> $\mid$ <e> + <e> $\mid$ C $\overline{\langle e \rangle}$ $\mid$ $\instantiateS{A,B}{f}(\langle \textit{e} \rangle)$ $\mid$ $\lowerS{A}{\langle \textit{e} \rangle}$

  <fn-def> ::= $\overline{\langle \textit{fn-case} \rangle}$

  <fn-case> ::= f <pattern> := <e>
\end{grammar}
  \caption{Grammar for restricted \tool subset}
  \label{fig:subset-grammar}
\end{figure}

The semantics for SSL are largely derived~\cite{polikarpova:2019:suslik} from standard separation logic semantics.~\cite{rowe:2017:auto-cyclic-term}

\subsection{Overview of the Two Interpretations}

The soundness theorem will link the abstract machine semantics to the translation. In fact, the abstract machine
semantics and the translation are similar to each other. For the abstract machine we manipulate \textit{concrete} heaps, while
for the translation we generate \textit{symbolic} heaps.

Comparing the two further, there are two main points (beyond what we've already mentioned) where these two interpretations of \tool differ:


      \begin{enumerate}
        \item When we need to unfold a layout, how do we know which layout branch to choose?
        \item How do we translate function applications (including, but not limited to, recursive applications)?
      \end{enumerate}

First, consider the abstract machine semantics. In this case, we are able to choose which branch of a layout to use by evaluating the expression we are applying it
          to until the expression is reduced to a constructor value (where a ``constructor value'' is either a value or a constructor applied to constructor values). If the expression is well-typed, this will always be a constructor
          of the algebraic data type corresponding to the layout. The two rules that this applies to are {\sc AM-Lower} and
          {\sc AM-Instantiate}.
        To interpret a function application, we interpret its arguments and substitute the results into the body
          of the function. We then proceed to interpret the substituted function body. This process is performed by the {\sc AM-Instantiate} rule.

Next, consider the SSL translation. Here we can determine which layout branch to use by generating a Boolean condition that will be
          true if and only if the SSL proposition on the right-hand side of the branch holds for the heap. Note
          that we assume that the programmer-supplied layout definitions are \textit{injective} functions from
          algebraic data types to SSL assertions (up to bi-implication).
We can directly translate function applications into SSL inductive predicate applications. Since inductive predicates
          already allow for recursive applications, there is no special handling necessary for recursion.

%
%
%

After defining these interpretations, we show how the abstract machine relation $\step$ and the SSL interpretation function $\Tsem{e}{V,r}$ relate to each other by the Soundness \autoref{thm:gen-soundness}.

%
%
%

\subsection{Abstract Machine Semantics}
In this section, we will define an abstract machine semantics for \tool and relate this to
the standard semantics for SSL.

\subsubsection{Notation and Setup}


The set of values is $\Val = \mathbb{Z} \cup \mathbb{B} \cup \Loc$. Each of these three sets
is disjoint from the other two. In particular, note that $\Loc$ and $\mathbb{Z}$ are disjoint.

There is also a set of \tool values \FsVal. This includes all the elements of \Val{}, but also includes
``constructor values'' given by the rules in \autoref{fig:FsVal-rules}.
In addition to the store and heap of standard SSL semantics, the abstract machine semantics uses an \FsStore. This is a partial function from locations
to \tool values: $\FsStore = \Loc \rightharpoonup \FsVal$. The primary purpose of this is to recover constructor values when given a location.

The general format of the transition relation is $(e, \sigma, h, \mathcal{F}) \step (v, \sigma', h', \mathcal{F}', r)$, where the expression $e$ results
in the store being updated from $\sigma$ to $\sigma'$, the heap being updated from $h$ to $h'$, $v$ is the \Val{} obtained by
evaluating $e$, the initial and final $\FsStore$ are $\mathcal{F}$ and $\mathcal{F}'$ and the result is stored in variable $r$. We assume that there is a global environment $\Sigma$ which contains all layout definition equations and function definition equations.
%
%



Given a heap $h$ and a heap layout $H$, we will make use of the
notation $h \cdot [H]$. This extends the heap with the location
assignments given in $H$. We say that the layout body $H$ is
\textit{acting} on the heap $h$. This is defined in
\autoref{fig:layout-act}. The intuition for this is that $h$ gets
updated using the symbolic heap description in $H$. For example,
$\emptyset \cdot [a \pointsto 7]$
will contain only the value $7$ at the location
$a$. It is assumed that $H$ does not have any variables on the
right-hand side of $\pointsto$.

\subsubsection{Abstract Machine Rules}

The abstract machine semantics provides big-step operational semantics
for evaluating \tool expressions on a heap machine. Its rules, given by
\autoref{fig:abs-machine}, make use of standard SSL models:

\begin{align*}
  &\text{Model} &\mathcal{M} ::= (\sigma, h)\\
  &\text{Store} &\sigma : \text{Var} \rightharpoonup \text{Val}\\
  &\text{Heap} &h : \text{Loc} \rightharpoonup \text{Val}\\
\end{align*}

\noindent
Note that a compound expression, consisting of multiple
subexpressions, uses \emph{disjoint} parts of the heap for each
subexpression. This can be seen in the {\sc AM-Add}, {\sc AM-Lower}
and {\sc AM-Instantiate} rules, which is essential for the proof of
Soundness \autoref{thm:gen-soundness}.

\begin{figure}[thbp]
  \[
  \begin{array}{c}
    \labinfer[\FsVal-Base]{x \in \FsVal}
      {x \in \Val}
    ~~~
    \labinfer[\FsVal-Constr]{(C\; x_1 \cdots x_n) \in \FsVal}
      {x_1 \in \FsVal & \cdots & x_n \in \FsVal}
  \end{array}
  \]
    \caption{\FsVal{} judgment rules}
    \label{fig:FsVal-rules}

    \bigskip

  \resizebox{\textwidth}{!}
{
\[
\begin{array}{c}
  \labinfer[AM-Int]{(i, \sigma, \emptyheap, \mathcal{F}) \step (i, \sigma', \emptyheap, \mathcal{F}, r) }
    {\freshVar{r} & i \in \mathbb{Z} & \sigma' = \sigma \cup \{ (r, i) \}}
  ~~~
  \labinfer[AM-Bool]{(b, \sigma, \emptyheap, \mathcal{F}) \step (b, \sigma', \emptyheap, \mathcal{F}, r) }
    {\freshVar{r} & b \in \mathbb{B} & \sigma' = \sigma \cup \{ (r, b) \}}
  \\\\
  \labinfer[AM-Var-Base]{(v, \sigma, \emptyheap, \mathcal{F}) \step (\sigma(v), \sigma, \emptyheap, \mathcal{F}, v)}
    {v \in dom(\sigma) & \sigma(v) \not\in \Loc}
  \\\\
  \labinfer[AM-Var-Loc]{(v, \sigma, \emptyheap, \mathcal{F}) \step (\mathcal{F}(\sigma(v)), \sigma, \emptyheap, \mathcal{F}, v)}
    {v \in dom(\sigma) & \sigma(v) \in \Loc}
  \\\\
  \labinfer[AM-Add]{(x + y, \sigma, h, \mathcal{F}) \step (z, \sigma', h', \mathcal{F}, r)}
    {\begin{gathered}
      (x, \sigma, \mathcal{F}, h_1) \step (x', \sigma_x, h_1', \mathcal{F}, v_x)
      ~~~ (y, \sigma_x, \mathcal{F}, h_2) \step (y', \sigma_y, h_2', \mathcal{F}, v_y)
    \\  \freshVar{r}
    ~~~ h = h_1 \circ h_2
    ~~~ h' = h_1' \circ h_2'
    ~~~ z = x' + y'
    ~~~ \sigma' = \sigma_y \cup \{ (r, z) \}
     \end{gathered}
    }
  \\\\
  \labinfer[AM-Lower]{(\lowerS{A}{e}, \sigma_0, h, \mathcal{F}) \step ((C\; e_1' \cdots e_n'), \sigma', h' \cdot [H'], \mathcal{F}', r)}
     {\begin{gathered}
          (A[x]\; (C\; a_1 \cdots a_n) := H) \in \Sigma
       ~~~ (e, \sigma_0, h_0) \step (C\; e_1 \cdots e_n, \sigma_1, h_1, y_1)
       \\  (e_i, \sigma_i, h_i) \step (e_i', \sigma_{i+1}, h_{i}', v_i)\; \textnormal{for each $1 \le i \le n$}
       \\ h' = h_1' \circ h_2' \circ \cdots \circ h_n'
       ~~~ h = h_0 \circ h_1 \circ \cdots \circ h_n
       ~~~ \sigma' = \sigma_{n+1} \cup \{ (r, \ell) \}
       \\  \freshLoc{\ell}
       ~~~ \freshVar{r}
       \\  H' = H[x := \ell][a_1 := \sigma_2(v_1)][a_2 := \sigma_3(v_2)]\cdots[a_n := \sigma_{n+1}(v_n)]
       \\  \mathcal{F}' = \mathcal{F} \cup \{ (\ell, (C\; e_1' \cdots e_n')) \}
     \end{gathered}
    }
  \\\\
  \labinfer[AM-Instantiate]{(\instantiateS{A,B}{f}(e), \sigma, \mathcal{F}, h) \step (e_f', \sigma', h'', \mathcal{F}', r)}
    {\begin{gathered}
        (A[x]\; (C\; a) := H) \in \Sigma
    ~~~ (f\; (C\; b) := e_f) \in \Sigma
    \\  (e, \sigma, h) \step (C\; e_1, \sigma_1, h_1, y)
    \\  (e_1, \sigma_1, h_1) \step (e_1', \sigma_2, h_2, r)
    \\ \freshLoc{\ell}
    ~~~ \freshVar{r}
      ~~~ \freshVar{y}
    \\  H' = H[x := \ell][a := e_1']
    ~~~  h' = h_1 \cdot [H']
    ~~~ \sigma' = \sigma_f \cup \{ (r, \ell) \}
    \\  \mathcal{F}' = \mathcal{F} \cup \{ (\ell, (C\; e_1')) \}
    \\  (\lowerS{B}{e_f[b := y]}, \sigma_2, h') \step (e_f', \sigma_f, h'', r)
     \end{gathered}
    }
\end{array}
\]
}


  \caption{Abstract machine semantics rules}
  \label{fig:abs-machine}

  \bigskip
  \[
  \begin{array}{c}
    \labinfer[L-Emp]{h \cdot [\emp] = h}
      {}
    ~~~
    \labinfer[L-PointsTo]{h \cdot [\ell \pointsto a, H] = h'[\ell \mapsto a]}
      {h' = h \cdot [H] & a \in \Val}
    \\\\
    \labinfer[L-Apply]{h \cdot [A[x](e), H] = h \cdot [H]}
      {e \in \Val}
  \end{array}
  \]
    \caption{Rules for layout bodies acting on heaps}
    \label{fig:layout-act}
\end{figure}


\subsection{Translating \tool Specifications into SSL}

We will define two translations: One from \tool \textit{expressions} into SSL propositions and the other from
\tool \textit{definitions} into SSL inductive predicate definitions. We start with the former.

\subsubsection{Translating Expressions}

In the rules given in \autoref{fig:expr-rules}, the notation $\mathcal{I}_{A,B}(f)$ gives the name of the inductive predicate that the \tool
function $f$ translates to when it is instantiated to the layouts $A$ and $B$.

We start by defining the translation rules for expressions. We use these translation rules in \autoref{thm:ssem-fn} to
define a translation function $\Tsem{\cdot}{\VarSet,r}$. Then, we will define translation rules for function definitions. The translation relation for expressions has the form $(e, \VarSet) \tstep (p, s, \VarSet', v)$,
where $p$ and $s$ are the pure part and spatial part (respectively) of an SSL assertion and $\VarSet,\VarSet' \in \mathcal{P}(\Var)$.

The rules can be thought of as being in two groups:

\begin{enumerate}
  \item Rules for base type expressions, such as {\sc S-Lit} and {\sc S-Add}.
  \item Rules for using layouts to translate expressions whose types involve algebraic data types. Examples
    include {\sc S-Lower-Constr} and {\sc S-Inst-Inst}.
\end{enumerate}

In the first group, consider {\sc S-Add}. In the result of the translation, we've included $v == v_1 + v_2$ in the list of conjuncts
in the pure part. Here, $v_1$ and $v_2$ are the results of the two subexpressions in the addition. In the pure part,
we also include the pure parts of the two subexpressions as conjuncts. These are $p_1$ and $p_2$. The spatial part
of the translation consists of the spatial parts of the two subexpressions, $s_1$ and $s_2$.

Now, in the second group, consider {\sc S-Lower-Constr}. This translates a Pika constructor application expression using
a specific layout (which is provided by using the $\lowerS{-}{-}$ construct). It takes the specific branch of the layout
corresponding to the constructor in question and puts the right-hand side of that branch into the spatial part, after applying
the appropriate substitutions for the arguments given to the constructor in the application. The right-hand side of the layout
branch is $H$ and, after the substitution, it is called $H'$.

The {\sc S-Inst-Inst} rule is used to translate a function application being applied to the result of another function application, given
particular layouts for each application. In \suslik, it does not make sense to directly apply a predicate to another predicate application. Therefore, we must do an ANF-like
translation, where the result does not have ``compound'' applications like this. This translation is exactly what {\sc S-Inst-Inst} is doing.

\begin{figure}[thbp]
    \resizebox{\textwidth}{!}
{
\[
\begin{array}{c}
  \labinfer[S-Int]{(i, \VarSet) \tstep (v == i, \emp, \VarSet \cup \{v\}, v)}
    {i \in \mathbb{Z} &
     \freshVar{v}}
  \\\\
  \labinfer[S-Bool]{(b, \VarSet) \tstep (v == b, \emp, \VarSet \cup \{v\}, v)}
    {b \in \mathbb{B} &
     \freshVar{v}
    }
  \\\\
  \labinfer[S-Var]{(v, \VarSet) \tstep (\texttt{true}, \emp, \VarSet, v)}
    {v \in \Var}
  \\\\
  \ruleSAdd
  \\\\
  \labinfer[S-Lower-Var]{(\lowerS{A}{v}, \VarSet) \tstep (\texttt{true}, A(v), \VarSet \cup \{v\}, v)}
    {\begin{gathered}
      v \in \Var
     \end{gathered}
    }
  \\\\
  \labinfer[S-Lower-Constr]{
        (\lowerS{A}{C\; e_1 \cdots e_n}, \VarSet_1)
          \tstep (p_1 \land \cdots \land p_n, H' \sep s_1 \sep \cdots \sep s_n, \VarSet', x)}
    {\begin{gathered}
          (A[x]\; (C\; a_1 \cdots a_n) := H) \in \Sigma
      \\  (e_i, \VarSet_i) \tstep (p_i, s_i, \VarSet_{i+1}, v_i)\; \textnormal{for each $1 \le i \le n$}
      \\  \freshVar{v}
      \\  \VarSet' = \VarSet_{n+1} \cup \{x\}
      ~~~ H' = H[a_1 := v_1]\cdots[a_n := v_n]
     \end{gathered}
    }
  \\\\
  \labinfer[S-Inst-Var]{(\instantiateS{A,B}{f}(v), \VarSet) \tstep (\texttt{true}, \mathcal{I}_{A,B}(f)(v, r), \VarSet \cup \{r\}, r)}
    {v \in \VarSet & \freshVar{r}}
  \\\\
  \labinfer[S-Inst-Constr]{
        (\instantiateS{A,B}{f}(C\; e_1 \cdots e_n), \VarSet_1)
                                     \tstep (p \land p_1 \land \cdots \land p_n, s \sep s_1 \sep \cdots \sep s_n, \VarSet', r)
      }
    {\begin{gathered}
          (A[x]\; (C\; a_1 \cdots a_n) := H) \in \Sigma
      \\  (e_i, \VarSet_i) \tstep (p_i, s_i, \VarSet_{i+1}, v_i)\; \textnormal{for each $1 \le i \le n$}
      \\  \freshVar{x}
      \\  \VarSet' = \VarSet_{n+1} \cup \{x\}
      ~~~ H' = H[a_1 := v_1]\cdots[a_n := v_n]
      \\
          (f\; (C\; b_1 \cdots b_n) := e_f) \in \Sigma
      ~~~ e_f' = e_f[b_1 := v_1]\cdots[b_n := v_n]
      \\  (\lowerS{B}{e_f'}, \VarSet_{n+1}) \tstep (p, s, \VarSet', r)
     \end{gathered}
    }
  \\\\
  \labinfer[S-Inst-Inst]{
      (\instantiateS{B,C}{f}(\instantiateS{A,B}{g}(e)), \VarSet) \tstep (p_1 \land p_2, s_1 \sep s_2, \VarSet_2, r_2)
    }
    {\begin{gathered}
      (\instantiateS{A,B}{g}(e), \VarSet) \tstep (p_1, s_1, \VarSet_1, r_1)
      \\ (\instantiateS{B,C}{f}(r_1), \VarSet_1) \tstep (p_2, s_2, \VarSet_2, r_2)
     \end{gathered}
    }
\end{array}
\]
}
\caption{Expression Translation Rules}
\label{fig:expr-rules}
\bigskip
\[
  \begin{array}{c}
    \labinfer[FnDef]{(f\; (C\; a_1 \cdots a_n) := e) \defstep{A,B} (\mathcal{I}_{A,B}(f)(x, r) : c \Ra \{p ; s\})}
      {\begin{gathered}
            \VarSet = \{ v_1 , \cdots , v_n \}\; \textnormal{where $v_1, \cdots, v_n$ are distinct variables}
        \\  r \in \Var
        ~~~ r \not\in \VarSet
        ~~~ c = \cond(A, C, x)
        ~~~ p_1 \cdots p_n~\textnormal{fresh}
        \\  (p, s) = \Tsem{\instantiateS{A,B}{f}(C\; p_1 \cdots p_n)}{\VarSet, r}
       \end{gathered}
      }
  \end{array}
  \]
  \caption{Translation rule for function definitions}
  \label{fig:FnDef-rule}

  \end{figure}

\begin{figure}[t]
  
  \end{figure}

\begin{lemma}[$\Tsem{\cdot}{}$ function] \label{thm:ssem-fn} $(\cdot, \VarSet) \tstep (\cdot, \cdot, \cdot, r)$ is a computable function $\Expr \rightarrow (\Pure \times \Spatial \times \mathcal{P}(\Var))$, given fixed
  $\VarSet$ and $r$ where $r \not\in \VarSet$.

By throwing away the third element of the tuple in the codomain, we obtain a function $\Expr \rightarrow (\Pure \times \Spatial)$ from expressions to
SSL propositions.

  Call this function $\Tsem{\cdot}{\VarSet, r}$. That is, we define the function as follows where $r \not\in \VarSet$:
  \[
    \Tsem{e}{\VarSet, r} = (p ; s) \iff (e, \VarSet) \tstep (p, s, \VarSet', r)\; \textnormal{for some $\VarSet'$}
  \]
\end{lemma}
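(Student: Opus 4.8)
The plan is to prove all three properties bundled into the word ``function''---single-valuedness, definedness, and computability---by a single structural induction on the \tool expression $e$, following the grammar of \autoref{fig:subset-grammar}. The key preliminary move is to make the freshness side-conditions deterministic: I would fix once and for all an enumeration of $\Var$ and read every premise $\freshVar{v}$ as ``$v$ is the least variable in that enumeration not already in the current used-set,'' with the threaded sets $\VarSet_0, \VarSet_1, \dots$ recording exactly which names have been consumed. With this convention the output set $\VarSet'$ is determined by the inputs, so the existential quantifiers over fresh names that appear throughout \autoref{fig:expr-rules} collapse to a single canonical choice. I also record the standing assumption that the global environment $\Sigma$ is well-formed, meaning that each layout/constructor pair $A[x]\,(C\;\cdots)$ and each function/constructor pair $f\,(C\;\cdots)$ selects a \emph{unique} defining equation; this is what makes the side-condition lookups $(A[x]\,(C\;\cdots):=H)\in\Sigma$ and $(f\,(C\;\cdots):=e_f)\in\Sigma$ single-valued.

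Single-valuedness then reduces to checking that the rule set is syntax-directed. First I would verify that the nine rules partition the expression forms: the literal, Boolean, variable, and addition forms each match exactly one of \textsc{S-Int}, \textsc{S-Bool}, \textsc{S-Var}, \textsc{S-Add}; the form $\lowerS{A}{-}$ is dispatched by the head of its body to either \textsc{S-Lower-Var} or \textsc{S-Lower-Constr}; and $\instantiateS{A,B}{f}(-)$ is dispatched to \textsc{S-Inst-Var}, \textsc{S-Inst-Constr}, or \textsc{S-Inst-Inst} according to whether its argument is a variable, a constructor application, or a nested instantiation. These sub-cases are pairwise disjoint, and on well-typed expressions they are exhaustive (a lowered or instantiated expression of algebraic type is, under the restrictions of this subset, always one of the listed shapes). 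Given the matching rule, the induction hypothesis supplies unique outputs for the structurally smaller subexpressions, the deterministic fresh-name policy fixes all newly introduced variables, and the unique $\Sigma$-lookup fixes $H$ and $e_f$; assembling these yields a unique $(p,s,\VarSet')$ and hence the claimed function into $\Pure\times\Spatial\times\mathcal{P}(\Var)$. Discarding $\VarSet'$ gives the advertised $\Tsem{\cdot}{\VarSet,r}\colon\Expr\to(\Pure\times\Spatial)$, with $r$ being the result variable returned in the fourth component.

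The step I expect to be the real obstacle is termination---and hence computability---of the recursion, because it is \emph{not} purely structural. In \textsc{S-Inst-Constr} the premise translates $\lowerS{B}{e_f'}$, where $e_f'$ is an \emph{instance of a function body} and may be larger than the original expression, so a naive structural measure fails here. The resolution I would give rests on the observation that \textsc{S-Inst-Var}---the rule for an instantiation applied to a variable---emits a predicate application $\mathcal{I}_{A,B}(f)(v,r)$ \emph{without} unfolding the definition of $f$. Since in a well-formed program every recursive call occurs applied to a pattern-bound variable, such calls are always discharged by \textsc{S-Inst-Var} and never re-enter \textsc{S-Inst-Constr}. I would therefore exhibit a well-founded measure that orders calls lexicographically by the finite set of pending function-body unfoldings (cut off at variable arguments by \textsc{S-Inst-Var}) and then by the structural size of the expression being translated, and check that every premise in \autoref{fig:expr-rules} strictly decreases it. Given determinism together with this termination argument, reading the rules top-down is an effective procedure, which establishes computability and completes the proof; the same development also licenses the use of $\Tsem{\cdot}{\VarSet,r}$ inside the \textsc{FnDef} rule of \autoref{fig:FnDef-rule}.
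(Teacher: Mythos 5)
Your proposal cannot be compared against the paper's argument for the simple reason that the paper offers none: Lemma~\ref{thm:ssem-fn} is stated and immediately used (the following sentence only remarks on why computability matters), so what you have written is a reconstruction of a missing proof. Judged on its own terms, the routine parts are right and match what one would expect: determinising the $\freshVar{v}$ side-conditions by fixing an enumeration of $\Var$, observing that the rules of \autoref{fig:expr-rules} are syntax-directed (dispatch on the head symbol, then on the shape of the argument of $\lowerS{A}{-}$ and $\instantiateS{A,B}{f}(-)$), and using uniqueness of the $\Sigma$-lookups to get single-valuedness by induction. You are also right to flag that the entire mathematical content lies in termination, since the premise $(\lowerS{B}{e_f'},\VarSet_{n+1}) \tstep (p,s,\VarSet',r)$ of \textsc{S-Inst-Constr} recurses into an instantiated \emph{function body}, which need not be structurally smaller.

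Two caveats, both latent in the paper but which your write-up should surface as explicit hypotheses rather than parentheticals. First, your termination measure rests on the assumption that every (mutually) recursive call in a function body is applied to a pattern-bound variable and is therefore absorbed by \textsc{S-Inst-Var} without unfolding; this assumption appears nowhere in the paper, and without it the lemma as stated is false. For instance, for a definition whose body re-applies the function to a constructor application, $f\;(C\;b) := C'\,(\instantiateS{A,B}{f}(C\;b))$, the derivation alternates \textsc{S-Inst-Constr} and \textsc{S-Lower-Constr} and unfolds $f$ forever, so the relation is not even total on $\Expr$, let alone computable; your measure only becomes well-founded once the guardedness condition is imposed, so it must be part of the lemma's statement. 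Second, your exhaustiveness claim for the instantiate rules is slightly overstated: by \textsc{T-Instantiate} a well-typed argument of $\instantiateS{A,B}{f}(-)$ may also have the form $\lowerS{A}{v}$ or $\lowerS{A}{C\;e_1 \cdots e_n}$, and none of \textsc{S-Inst-Var}, \textsc{S-Inst-Constr}, \textsc{S-Inst-Inst} covers these, so even on well-typed expressions the translation is partial unless such arguments are first normalised or $\Expr$ is read as the sublanguage the rules actually cover. Neither gap is of your making, but a complete proof of the lemma has to name both restrictions; with them added, your induction-plus-lexicographic-measure argument goes through.
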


We highlight the computability of this function to emphasise the fact
that it can be used directly in an implementation of this subset of
\tool.

\subsubsection{Translating Function Definitions}

The next step is to define the translation for \tool function definitions. In order to do this, we must first figure out how to determine the
appropriate layout branch to use when unfolding a layout, a problem we highlighted earlier. Once this is accomplished, the rest of
the translation can be defined. When this problem was solved for the abstract machine semantics,
it was possible to simply evaluate the \tool expression until a constructor application expression was reached. From there, it is possible to just look at the constructor name and match it against the appropriate layout branch.

For the translation, however, we do not have the luxury of being able to evaluate expressions. Instead, we must instead rely on the fact
that, in SSL, a ``pure'' (Boolean) condition can determine which inductive predicate branch to use. The question becomes: \textit{Given an algebraic
data type and a layout for that ADT, how do we
generate an appropriate Boolean condition for a given constructor for the ADT}?

The solution is to find a Boolean condition which, given that the inductive predicate holds, is true \textit{if and only if} the layout branch corresponding to that
ADT constructor is satisfiable. In more detail, to define the branches of an inductive predicate $\mathcal{I}_{A}(x)$, given an ADT $\alpha$, a constructor $C : \beta_1 \rightarrow \beta_2 \rightarrow \cdots \rightarrow \beta_n \rightarrow \alpha$, a layout $A : \alpha \monic \layout[x]$ with a branch $A[x]\; (C\; a_1 \cdots a_n) := H$ and given that $\mathcal{I}_{A}(x)$ holds, find a Boolean expression $b$ with one free variable $x$ such that $b \iff \exists \sigma, h.\; (\sigma, h) \models H$.

\begin{lemma}[$\cond$ function]
  \label{thm:cond}
  There is a computable function $\cond(\cdot,\cdot)$ that takes in any layout $A : \alpha \monic \layout[x]$ with a branch $A[x]\; (C\; a_1 \cdots a_n) := H$ for a given constructor
  $C : \beta_1 \rightarrow \beta_2 \rightarrow \cdots \rightarrow \beta_n \rightarrow \alpha$ and it produces a Boolean expression with one free variable $x$ such that the following holds under the assumption that $\mathcal{I}_{A}(x)$ holds.
  \[
    \cond(A, C) \iff \exists \sigma, h.\; (\sigma, h) \models H
  \]
  Here, $\mathcal{I}_{A}(x)$ is the name of the generated inductive
  predicate corresponding to the layout~$A$.
\end{lemma}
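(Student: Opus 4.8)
The plan is to construct $\cond(A,C)$ by a purely syntactic inspection of the single layout branch body $H$, and then to argue correctness from the SSL semantics of heaplets together with the injectivity assumption on layouts. The central observation is that a \emph{pure} Boolean expression over the free variable $x$ cannot dereference the heap: the only property of the head pointer $x$ that it can actually test is whether $x$ is null. So the construction reduces to deciding, from the syntax of $H$, whether the branch forces $x$ to be allocated (hence non-null) or leaves it unallocated (hence null under the $\mathcal{I}_{A}(x)$ assumption).

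Concretely, I would define $\cond(A,C)$ by scanning the finitely many heaplets of $H$ for one rooted at the head variable $x$, i.e.\ a heaplet of the form $\langle x, \iota\rangle \pts v$ (equivalently $x \pts v$). Recursive or nested layout applications $A'(e)$ occurring in $H$ are skipped by this scan, since they constrain sub-structures reachable \emph{through} $x$ rather than $x$ itself. If such a head heaplet is present, set $\cond(A,C) \defeq \neg(x = 0)$; otherwise the branch is $\emp$ up to applications on other roots, and we set $\cond(A,C) \defeq (x = 0)$. Because $H$ is a finite list of heaplets, this scan is a terminating syntactic recursion, which is precisely what establishes the claimed computability of $\cond$.

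For correctness I would argue both directions under the standing assumption that $\mathcal{I}_{A}(x)$ holds on the current model $(\sigma,h)$. By the definition of the generated inductive predicate, $(\sigma,h)$ satisfies exactly one branch body, and by injectivity of the layout the branch bodies are pairwise non-bi-implied, so this branch is uniquely determined. The SSL semantics of the points-to heaplet asserts that its root is \emph{accessible}, and the null location is never accessible; hence any model of a head-allocating branch has $\sigma(x)\neq 0$, while any model of an $\emp$-headed branch, being forced to describe the empty (sub-)heap at the root, has $\sigma(x)=0$. Matching these two semantic facts against the two cases in the definition of $\cond$ yields that $\cond(A,C)$ is true exactly when the current heap satisfies $H$, which is the desired equivalence $\cond(A,C) \iff \exists \sigma,h.\,(\sigma,h)\models H$.

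The main obstacle is justifying that a pure condition on $x$ \emph{alone} is expressive enough to separate the $C$-branch from all others, and this is exactly where the injectivity hypothesis is indispensable. I expect two delicate points. First, the interaction between the existential $\exists\sigma,h$ and the $\mathcal{I}_{A}(x)$ assumption must be stated precisely, so that ``satisfiable'' genuinely means ``satisfied by the current inductive-predicate heap'' rather than the vacuously-true satisfiability of $\emp$. Second, and more seriously, the null-test suffices only when distinct constructors differ in their head-allocation status; for a layout with two non-empty branches sharing the same head shape, no pure condition on $x$ can tell them apart. The crux of the argument is therefore to show that injectivity rules this situation out within the supported fragment: if two branches were indistinguishable by every pure head condition, one would have to exhibit a bi-implication between their bodies on all models of $\mathcal{I}_{A}(x)$, contradicting the injectivity assumption. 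Pinning down the exact class of layouts for which this holds, and verifying that the residual distinctions injectivity guarantees are themselves expressible purely over $x$, is the hardest part of the proof.
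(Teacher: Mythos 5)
You should know at the outset that the paper itself contains \emph{no} proof of this lemma: it is stated, used immediately to define the \textsc{FnDef} rule, and never argued for, either in the body or in the appendix (the appendix proves only the soundness theorem). So the only thing to compare against is the construction implicit in the paper's generated specifications, and there your syntactic definition is exactly right: every predicate the translation emits guards its branches with $x == 0$ or $\neg(x == 0)$ according to whether the branch body is $\emp$ or contains a heaplet rooted at the head pointer (see the generated \texttt{Sll}, \texttt{filterLt9}, and \texttt{fold\_List} predicates). For layouts of that shape --- one empty branch and one head-allocating branch per head pointer --- your correctness argument is sound: a points-to heaplet asserts accessibility of its root, the null location is never accessible, and under the assumption $\mathcal{I}_{A}(x)$ exactly one branch body holds, so nullity of $\sigma(x)$ determines the branch. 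Your remark that the $\exists \sigma, h$ on the right-hand side must be read as ``satisfied by the current sub-heap'' rather than bare satisfiability is also well taken --- as literally written, $\exists \sigma, h.\;(\sigma,h) \models H$ is true on \emph{every} model for a head-allocating $H$, and your proof correctly uses the intended reading instead.

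The genuine gap is the final step, where you hope the injectivity hypothesis rescues the general multi-constructor case. It does not, and you cannot close the proof that way. Injectivity ``up to bi-implication'' only says that distinct constructor values map to non-equivalent assertions; it says nothing about separability by a \emph{pure} condition over $x$ alone. Concretely, take $\texttt{data}\; T := A \mid B\; \Int \mid C\; \Int\; \Int$ with layout branches $B$: $x \pointsto a$ and $C$: $x \pointsto a, (x+1) \pointsto b$. These bodies are not bi-implied, so the layout is injective in the paper's sense, yet both force $x \neq 0$, and no heap-blind Boolean expression in $x$ can distinguish them: on a model where the $C$-branch holds, $\cond(A,C)$ must be true and $\cond(A,B)$ false, but the store assigns $x$ the same non-null value in both candidate readings, so any pure expression takes the same truth value for both. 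Hence your proposed argument ``if two branches were indistinguishable by every pure head condition, injectivity would be contradicted'' reverses a non-implication; the lemma as stated is simply false outside the fragment where head-pointer nullity determines the constructor. An honest proof must build that restriction into the statement (at most one $\emp$ branch and at most one head-allocating branch per head pointer --- the class all of the paper's examples inhabit, since they add tag-free two-constructor layouts only), rather than derive it from injectivity. That the paper asserts the lemma without proof and in full generality suggests the authors implicitly assumed this restricted class; your attempt has the virtue of exposing exactly where the unstated assumption lives.
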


With this function in hand, we are now ready to define the translation for \tool function definitions. This
definition is in \autoref{fig:FnDef-rule}. In this rule, the fresh variables $p_1, \cdots, p_n$ will be substituted for $a_1, \cdots, a_n$.


\subsection{Typing Rules}

Typing rules for \tool expressions are given in
\autoref{fig:typing-rules}. These rules differ from standard typing
rules for a functional language due to the existence of layouts and
their associated constructors, like \verb|instantiate| and
\verb|lower|. If an expression is well-typed, then each use of
\verb|instantiate| and \verb|lower| only uses layouts together with
the ADT that they are defined for.

The rules also make use of a \textit{concreteness judgment}. The rules
for this judgment are given in \autoref{fig:concreteness-rules}. The
intuition of this judgment is that a type is ``concrete'' iff values
of that type can be directly represented in the heap machine
semantics. For example, an ADT type is \textit{not} concrete because a
layout has not been specified. However, once a particular layout is
specified for the ADT type, it becomes concrete. Base types, like
\verb|Int|, are also concrete.

Rules for the ensuring that global definitions are well-typed are
given in \autoref{fig:globals-typing-rules}. In this figure, $\Delta$
is the set of all (global) constructor type definitions.

\begin{figure}[bth]
\[
\begin{array}{c}
  \labinfer[T-Int]{\Gamma \vdash i : \Int}
    {i \in \mathbb{Z}}
  ~~~
  \labinfer[T-Bool]{\Gamma \vdash b : \Bool}
    {b \in \mathbb{B}}
  ~~~
  \labinfer[T-Var]{\Gamma \vdash v : \alpha}
    {(v : \alpha) \in \Gamma}
  \\\\
  \labinfer[T-Fn-Global]{\Gamma \vdash f : \alpha \rightarrow \beta}
    {(f : \alpha \rightarrow \beta) \in \Sigma}
  \\\\
  \labinfer[T-Add]{\Gamma \vdash x + y : \Int}
    {\Gamma \vdash x : \Int & \Gamma \vdash y : \Int}
  \\\\
  \labinfer[T-Lower-Var]{\Gamma \vdash \lowerS{A}{v} : A}
    {(v : \alpha) \in \Gamma & (A : \alpha \monic \layout[x]) \in \Sigma}
  \\\\
  \labinfer[T-Lower-Constr]{\Gamma \vdash \lowerS{A}{C\; e_1 \cdots e_n} : B}
    {\begin{gathered}
      (C : \alpha_1 \rightarrow \cdots \rightarrow \alpha_n \rightarrow \beta) \in \Sigma
      ~~~ (B : \beta \monic \layout[x]) \in \Sigma
      \\  \Gamma \vdash e_i \concrete{\alpha_i}\; \textnormal{for each $i$ with $1 \le i \le n$}
     \end{gathered}
    }
  \\\\
  \labinfer[T-Instantiate]{\Gamma \vdash \instantiateS{A,B}{f}(e) : B}
    {\begin{gathered}
          (A : \alpha \monic \layout[x]) \in \Sigma
      ~~~ (B : \beta \monic \layout[y]) \in \Sigma
      \\  \Gamma \vdash f : \alpha \rightarrow \beta
      ~~~ \Gamma \vdash e : A
     \end{gathered}
    }
  \\\\
  \labinfer[T-Constr]{\Gamma \vdash C\; e_1 \cdots e_n : \beta}
    {(C : \alpha_1 \rightarrow \cdots \rightarrow \alpha_n \rightarrow \beta) \in \Sigma &
     \Gamma \vdash e_i : \alpha_i\; \textnormal{for each $i$ with $1 \le i \le n$}
    }
\end{array}
\]
  \caption{Typing rules}
  \label{fig:typing-rules}
\end{figure}

\begin{figure}[b]
\centering
\begin{subfigure}{0.49\textwidth}
\begin{minipage}{1.0\linewidth}
{\footnotesize{
\[
  \begin{array}{c}
    \labinfer[C-Int]{e \concrete{\Int}}
      {\Gamma \vdash e : \Int}
    ~~~
    \labinfer[C-Bool]{e \concrete{\Bool}}
      {\Gamma \vdash e : \Bool}
    \\\\
    \labinfer[C-Layout]{e \concrete{\alpha}}
      {(A : \alpha \monic \layout[x]) \in \Sigma &
       \Gamma \vdash e : A
      }
  \end{array}
\]
}}
\end{minipage}
  \caption{Concreteness judgment rules}
  \label{fig:concreteness-rules}
\end{subfigure}
\begin{subfigure}{0.49\textwidth}
\begin{minipage}{1.0\linewidth}
{\footnotesize{
\[
\begin{array}{c}
  \labinfer[G-Fn]{(f\; (C\; b_1 \cdots b_n) := e) \Rightarrow f : \beta \rightarrow \gamma}
    {(C : \alpha_1 \rightarrow \alpha_2 \rightarrow \cdots \rightarrow \alpha_n \rightarrow \beta) \in \Delta\\
    b_1 : \alpha_1, b_2 : \alpha_2, \cdots, b_n : \alpha_n \vdash e : \gamma
    }
\end{array}
\]
}}
\end{minipage}
  \caption{Global definition typing}
  \label{fig:globals-typing-rules}
\end{subfigure}
\caption{Rules for concreteness judgement and typing global definitions}
\label{fig:meh}
\end{figure}


\subsection{From \tool to SLL Specifications: Soundness of the Translation}
\label{sec:soundness}

We want to show that our abstract machine semantics and our SSL
translation fit together. In particular, our abstract machine
semantics should generate models that satisfy the separation logic
propositions given by our SSL translation.
Figure~\ref{fig:soundness-diagram} gives a high-level overview of how
these pieces fit together. We will give a more specific description of
this in Theorem~\ref{thm:gen-soundness}.


\begin{figure}
\[\begin{tikzcd}
	{\textnormal{\tool}} &&&&& {\textnormal{SSL propositions}} \\
	\\
	\\
	{\textnormal{Abstract machine semantics}}
	\arrow["{\textnormal{translation}}", from=1-1, to=1-6]
	\arrow["{\textnormal{interpretation}}"', from=1-1, to=4-1]
	\arrow["\models"', from=4-1, to=1-6]
\end{tikzcd}\]
  \caption{The relationship between the two \tool{} semantics given by the soundness theorem}
  \label{fig:soundness-diagram}
\end{figure}

\begin{restatable}[Soundness]{theorem}{soundnessThm} \label{thm:gen-soundness}
  For any well-typed expression $e$, if $\Tsem{e}{\VarSet, r}$ is
  satisfiable for $\VarSet = \dom(\EndSigma)$ and $(e, \StartSigma, \Start{h}, \mathcal{F}) \step (e', \EndSigma, \End{h'}, \mathcal{F}', r)$,
          then $\EndPair \models \Tsem{e}{\VarSet, r}$. 
  That is, given an expression $e$ with a satisfiable SSL translation, any heap machine state that $e$ transitions to (by the abstract machine semantics) will
  be a model for the SSL translation of $e$ (\cf~\autoref{fig:soundness-diagram}).
\end{restatable}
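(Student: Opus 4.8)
The plan is to prove Soundness \autoref{thm:gen-soundness} by induction on the derivation of the abstract machine reduction $(e, \StartSigma, \Start{h}, \mathcal{F}) \step (e', \EndSigma, \End{h'}, \mathcal{F}', r)$, rather than on the structure of $e$. The reason is that the {\sc AM-Instantiate} rule evaluates the substituted function body $e_f$, which is not a syntactic subterm of the original expression, so only an induction on the reduction tree supplies the induction hypothesis needed for that step (and, when $f$ is recursive, the hypothesis at strictly smaller derivations). The first observation is that, for a well-typed $e$, each abstract machine rule of \autoref{fig:abs-machine} is in one-to-one correspondence with exactly one expression translation rule of \autoref{fig:expr-rules}, so the induction splits into the same cases as the translation. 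Throughout, I read the translation through the function $\Tsem{\cdot}{\VarSet, r}$ of \autoref{thm:ssem-fn}, renaming the result variable produced by each rule (\eg the head variable $x$ in {\sc S-Lower-Constr}) to the common $r$ of the theorem statement.

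Before the main induction I would isolate two routine helper lemmas. The first is a \emph{store-weakening} lemma: if $(\sigma_0, h_0) \models (p; s)$ and $\sigma_0 \subseteq \sigma_1$ with all new bindings on names fresh for $\vars{p} \cup \vars{s}$, then $(\sigma_1, h_0) \models (p; s)$. This is needed because the abstract machine threads an ever-growing store (\eg $\sigma' = \sigma_y \cup \{(r, z)\}$ in {\sc AM-Add}) while each sub-result is proved against an earlier, smaller store. The second is a \emph{heap-action} lemma connecting \autoref{fig:layout-act} to SSL satisfaction: a concrete heap $h_0 \cdot [H']$, where $H'$ is a fully instantiated layout body, models the symbolic spatial heap $H'$ once the head location is bound appropriately in the store. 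With these in hand the base cases ({\sc AM-Int}, {\sc AM-Bool}, {\sc AM-Var-Base}, {\sc AM-Var-Loc}) are immediate: the translation yields $(r == i;\, \emp)$ or $(\texttt{true};\, \emp)$, and the final empty-heap model satisfies it because $\EndSigma$ records the computed value.

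For the compound non-recursive cases the argument is uniform. In {\sc AM-Add} I apply the induction hypothesis to the two subderivations to obtain $(\sigma_x, h_1') \models (p_1; s_1)$ and $(\sigma_y, h_2') \models (p_2; s_2)$; the disjointness $\End{h'} = h_1' \circ h_2'$ is exactly what is needed to satisfy the separating conjunction $s_1 \sep s_2$, the store-weakening lemma lifts both sub-satisfactions to $\EndSigma$, and the new pure conjunct $v == v_1 + v_2$ holds because $\EndSigma(r) = x' + y'$ with $v_1, v_2$ tracking the sub-results. The {\sc AM-Lower} case (matching {\sc S-Lower-Constr}) is analogous: the induction hypothesis handles the constructor arguments on their disjoint heaps, and the heap-action lemma shows the concrete extension $h' \cdot [H']$ models the symbolic $H'$, while the constructor value stored in $\mathcal{F}'$ is irrelevant to the SSL model and is discarded.

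The main obstacle is the function-application case, {\sc AM-Instantiate}, and in particular its interaction with the predicate-producing rule {\sc S-Inst-Var}, which emits a bare inductive-predicate application $\mathcal{I}_{A,B}(f)(v, r)$ instead of an inlined body (the sibling {\sc S-Inst-Constr} is the inlined analogue, handled exactly like {\sc AM-Lower}, and {\sc S-Inst-Inst} contributes only ANF bookkeeping). Here I must show $\EndPair \models \mathcal{I}_{A,B}(f)(v, r)$, which means unfolding the predicate once according to its \autoref{fig:FnDef-rule} definition. The correct branch is chosen by the $\cond$ function, and this is precisely where \autoref{thm:cond}---and the standing satisfiability hypothesis, which licenses its ``$\mathcal{I}_{A}(x)$ holds'' precondition---is invoked: by the assumed injectivity of layouts, the branch $\cond$ selects is the one for the constructor $C$ that the abstract machine recovered for the argument through $\mathcal{F}$. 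Once that branch is fixed, its body is by construction the translation of $\lowerS{B}{e_f[\overline{b := v}]}$, the very expression whose evaluation appears as a subderivation of {\sc AM-Instantiate}, so the induction hypothesis on that subderivation discharges it; when $f$ is recursive the recursive predicate occurrence aligns with a strictly smaller reduction subderivation, which is exactly why the induction is taken on the derivation rather than on $e$. Finally, the fresh-location and fresh-variable side conditions guarantee that the cell allocated for the result stays disjoint from the frames produced by the argument and the body, so the resulting $\sep$-decomposition is well-formed and the soundness conclusion follows.
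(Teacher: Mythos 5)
Your proposal is correct, and it arrives at essentially the same case analysis as the paper's proof, but via a genuinely different induction principle, so the comparison is worth spelling out. The paper proves \autoref{thm:gen-soundness} by \emph{structural induction on $e$}, packaging store growth and heap disjointness into the single $\otimes$-pairing lemma (Lemma~\ref{thm:otimes-entail}), whereas you induct on the big-step derivation of $\step$ and split that lemma's content into a store-weakening lemma plus a heap-action lemma relating $h \cdot [H]$ to satisfaction of the symbolic $H$. Your choice of induction measure is arguably the stronger one: as you observe, in the \textsc{AM-Instantiate} case the evaluated body $\lowerS{B}{e_f[b := y]}$ is not a syntactic subterm of $e$, so the paper's structural induction hypothesis does not literally apply there, and the paper's text at that point retreats to an informal argument that the abstract machine's heap ``will match the inductive predicate'' $\mathcal{I}_{A,B}(f)$; your rule induction supplies the needed hypothesis at the body subderivation, handles recursive $f$ by descent to strictly smaller derivations, and your explicit one-step unfolding of the predicate with branch selection via $\cond$ (\autoref{thm:cond}, licensed by the standing satisfiability hypothesis) makes precise what the paper leaves implicit. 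What the paper's route buys in exchange is compactness: the $\otimes$ operator lets the \textsc{AM-Add} and lowering cases be discharged in two lemma applications, where you need weakening and heap-action invocations separately, and for the non-recursive cases structural induction suffices. One small imprecision on your side: the correspondence between abstract-machine and translation rules is not literally one-to-one---\textsc{AM-Var-Base} and \textsc{AM-Var-Loc} both answer to \textsc{S-Var}, and \textsc{AM-Lower} and \textsc{AM-Instantiate} each serve several \textsc{S-}rules depending on whether the argument is a variable, a constructor application, or a nested instantiate---but since in each derivation case the syntactic shape of $e$ uniquely determines the applicable translation rule, your case split goes through unchanged.
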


\begin{proof}
  See the Appendices
  in the extended version of the paper.
\end{proof}

The fact that, at the top level, we only translate function definitions suggests an additional theorem. We want to specifically
show that any possible function application is sound, in the sense just described. This immediately follows from Theorem~\ref{thm:gen-soundness}.

Abbreviating $\instantiateS{A,B}{f}$ as $f_{A,B}$, we can state the
following theorem:

\begin{theorem}[Application soundness] \label{thm:fn-soundness}
  For any well-typed function application $f_{A,B}(e)$, if
  $\Tsem{f_{A,B}(e)}{\VarSet, r}$ is satisfiable for $V = \dom(\EndSigma)$ and $(f_{A,B}(e), \StartSigma, \Start{h}, \mathcal{F}) \step (e', \EndSigma, \End{h'}, \mathcal{F}', r)$,
          then $\EndPair \models \Tsem{f_{A,B}(e)}{\VarSet, r}$.
\end{theorem}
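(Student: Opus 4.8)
The plan is to recognise that \autoref{thm:fn-soundness} is not an independent result but a direct specialisation of the general Soundness \autoref{thm:gen-soundness}. The key observation is that $f_{A,B}(e)$, being shorthand for $\instantiateS{A,B}{f}(e)$, is already an ordinary expression of the restricted subset: the grammar in \autoref{fig:subset-grammar} lists $\instantiateS{A,B}{f}(\langle e \rangle)$ as one of the production forms for an expression $\langle e \rangle$. Consequently, everything \autoref{thm:gen-soundness} asserts about an arbitrary well-typed expression applies verbatim when that expression happens to be a function application.

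Concretely, I would instantiate the universally quantified expression of \autoref{thm:gen-soundness}---call it $e_0$, to keep it distinct from the argument $e$ appearing inside the application---with $e_0 := f_{A,B}(e)$. This requires discharging the three premises of that theorem for $e_0$: well-typedness of $e_0$, satisfiability of $\Tsem{e_0}{\VarSet, r}$ at $\VarSet = \dom(\EndSigma)$, and the abstract-machine reduction $(e_0, \StartSigma, \Start{h}, \mathcal{F}) \step (e', \EndSigma, \End{h'}, \mathcal{F}', r)$. Each of these is literally one of the three hypotheses of \autoref{thm:fn-soundness}, so all three hold by assumption; the well-typedness premise in particular is witnessed by a derivation ending in \textsc{T-Instantiate}, but we only need its conclusion, which is assumed directly. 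Feeding these premises into \autoref{thm:gen-soundness} yields $\EndPair \models \Tsem{e_0}{\VarSet, r}$, \ie $\EndPair \models \Tsem{f_{A,B}(e)}{\VarSet, r}$, which is exactly the goal.

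No induction or case analysis is carried out at this level, since all of the genuine work---the case split over the \textsc{S-Inst-Var}, \textsc{S-Inst-Constr} and \textsc{S-Inst-Inst} translation rules and the matching \textsc{AM-Instantiate} reduction, together with the reasoning that disjoint sub-heaps combine correctly---has already been performed inside the proof of \autoref{thm:gen-soundness}. The only point that demands care, and hence the nearest thing to an obstacle, is purely bookkeeping: one must ensure that the fresh-name set $\VarSet$, the result variable $r$, and the final store $\EndSigma$ named in the two statements truly coincide, so that the satisfiability hypothesis and the satisfaction relation $\models$ both refer to the same translation $\Tsem{\cdot}{\VarSet, r}$. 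Because the two theorems already share this notation and both fix $\VarSet = \dom(\EndSigma)$, the specialisation is immediate and the corollary follows.
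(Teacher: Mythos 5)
Your proposal is correct and matches the paper's own argument exactly: the paper likewise proves Theorem~\ref{thm:fn-soundness} as an immediate specialisation of the general Soundness Theorem~\ref{thm:gen-soundness}, with $f_{A,B}(e)$ being just one form of well-typed expression. Your added bookkeeping about matching the hypotheses and the shared $\VarSet$, $r$, and $\EndSigma$ is sound and merely makes explicit what the paper leaves implicit.
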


\begin{proof}
  This follows immediately from Theorem~\ref{thm:gen-soundness}.
\end{proof}


\section{Extensions of \suslik}
\label{sec:extensions}

We have shown the translation from the functional specifications into
SSL specifications. However, some of the SSL specifications are not
supported in the original \suslik and existing variants.
In this section, we show how to extend the \suslik to support more
features to make the whole thing work. We will show the extensions on
the following three aspects:

\begin{itemize}
  \item How to describe and call an existing function within SSL predicates.
  \item How to make the result of one function call as the input of another function call.
  \item How to synthesise programs with inductive predicates without the help of pure theory.
\end{itemize}

\subsection{Function Predicates}
\label{sec:funcPred}

Without any modification upon the implementation, we find the SSL
predicate with some restrictions can be used to describe function
relations other than data structures (named function predicates). The
definition of \textbf{function predicates} is as follows:

\begin{definition}[Function Predicates]
    \label{def:funcPred}
    \normalfont
    Given any non-higher-order n-ary function $f$\lstinline{(x1, ..., xn)} in the functional language, the function predicate to synthesise $f$ has the following format:
    \begin{lstlisting}[language=SynLang]
    predicate predf(T x1, ... ,T xn, T output){...}
    \end{lstlisting}
    where \lstinline{T} $\in$ \{\lstinline{loc}, \lstinline{int}\}.
    The type of \lstinline{xi} (and \lstinline{output}) is decided by
    the type of $f$. If it is an integer in $f$, then its type is
    \lstinline{int}; otherwise, it is \lstinline{loc} (for any data
    structure in \tool).
\end{definition}

Since the input of the whole workflow is functional programs, the
``output'' in the definition is to provide another location for the
output of the function. And the specification to synthesise function
$f$ should have the following format:

\begin{lstlisting}[language=SynLang]
void f(loc x1, ... ,loc xn)
{x1 :-> v1 ** x2 :-> l2 ** sll(l2) ** ... ** xn :-> vn ** output :-> 0}
{x1 :-> v1 ** x2 :-> l2 ** ... ** xn :-> vn ** output :-> output0 ** 
 predf(v1, l2, ..., vn, output0)}
\end{lstlisting}

\subsection{SSL Rules for \func Structure}

As we show in previous examples, the reason we can have \func structure is that the points-to structure in the post-condition is always eliminated after write operations. For example, the in-placed \lstinline{inc1} functions specification is satisfied via the \writer (\autoref{fig:write}) on the location.

\begin{lstlisting}[language=SynLang]
void inc_y(loc y, loc x)
{x :-> vx ** y :-> vy}
{x :-> vx + vy ** y :-> vy}
\end{lstlisting}

\begin{figure}[t]
    \centering
    \begin{mathpar}
      \inferrule[\writer]
      {
      \mcode{\vars{e} \subseteq \env}
      \\
      \mcode{e \neq e'}
      \\
      \trans{\mcode{{\asn{\phi; \ispointsto{x}{e}{} \osep P}}}}
              {\mcode{\asn{\psi; \ispointsto{x}{e}{} \osep Q}}} {\mcode{\prog}}
      }
      {
      \mcode{
      \trans{\asn{\phi; {x} \pts e' \osep P}}
      {\asn{\psi; {x} \pts e \osep Q}}
      {{\deref{x} = e\ ;\ \prog}}
      }
      }
    \end{mathpar}
    \bigskip
    \begin{mathpar}
      \inferrule[\funcwrite]
        {
        \mcode{\forall i \in [1, n], \vars{e_i} \subseteq \env}
        \\
        \trans{\mcode{{\asn{\phi; P}}}}
              {\mcode{\asn{\psi; Q}}} {\mcode{\prog}}
        }
        {
        \mcode{\trans{\asn{\phi; {x} \pts e \osep P}}
        {\asn{\psi; \func\ f(e_1,\ldots,e_n,{x})\osep Q}}
        {{f(e_1,\ldots,e_n,{x})\ ;\ \prog}}
        }
        }
    \end{mathpar}
    
    \caption{The \writer and new \funcwrite rules in SSL}
    \label{fig:write}
\end{figure}



The core insight of \func structure is: since the function synthesised
by function predicate behaves like the pure function, it is the same
as the \writer rule in the sense that only the output location is
modified. Thus, we add the new \funcwrite rule into the zoo of SSL
rules (see \autoref{fig:write}).
To make the \func structure correctly equal to some ``write''
operation, the following restrictions should hold, which are achieved
by the translation:

\begin{itemize}
    \item If \lstinline[language=SynLang]{func f(x1, ..., xn, output)} appears in a post-condition, then no write rule can be applied to any \lstinline[language = SynLang]{xi}. This is to avoid the ambiguity of the \func.
    \item The type of function f is consistent.
\end{itemize}

Note that based on the setting of the function predicate, the parameters of the function call are pointers, while the parameters of the function predicate are content to which pointers point. Furthermore, we have the \func generated from function predicates and with the format defined in \autoref{sec:funcPred}. As a result, the equivalent original SSL that duplicates points-to of one location is not a problem, since they can be merged as one.

\subsection{Temporary Location for the Sequential Application}

Though richly expressive, SSL has difficulty in expressing the sequential application of functions. For example, given the \func structure available, the following function is not expressible within one function predicate:

\begin{lstlisting}[language=SynLang]
f x y = g (h x) y
\end{lstlisting}

If we attempt to express it, we will have the following part in the predicate:

\begin{lstlisting}[language=SynLang]
predicate f(loc x, loc y, loc output)
{... ** func h(x, houtput) ** func g(houtput, y, output)}
\end{lstlisting}

However, \lstinline{houtput} is not a location in the pre-condition, which is not allowed in SSL. Thus, we introduce a new keyword \lstinline{temp} to denote the temporary location for the sequential application. The new definition of \lstinline{func} is as follows:

\begin{lstlisting}[language=SynLang]
predicate f(loc x, loc y, loc output)
{... ** temp houtput ** func h(x, houtput) ** func g(houtput, y, output)}
\end{lstlisting}

Roughly speaking, the \lstinline{temp} structure will help to allocate a new location for the output of the first function, and then use it as the input of the second function. After all appearances of \lstinline{houtput} is eliminated, we will deallocate the location.

Note that the temporary variable is possible to appear in two different structures: recursive function predicates or \func call. The reason we don't need to consider the basic arithmetic operations is that the integer will be directly used as the predicate parameter, instead of the location as the parameter. For example, the sum of a list can be expressed as:

\begin{lstlisting}[language=SynLang]
predicate sum(loc l, int output){
| l == 0 => {output == 0; emp}
| l != 0 => {output == output1 + v; [l, 2] ** l :-> v ** l + 1 :-> lnxt ** sum(lnxt, output1)} }
\end{lstlisting}

Such sequential application is common in functional programming, especially in the recursive function. For example, it is not elegant to flatten a list of lists without the sequential application. 

\begin{lstlisting}[language=SynLang]
    flatten :: [[a]] -> [a]
    flatten [] = []
    flatten (x:xs) = x ++ flatten xs
\end{lstlisting}

We can express this function, but with some strange structure to store all temporary lists.

\begin{lstlisting}[language=SynLang]
predicate flatten(loc x, loc output){
| x == 0 => {output :-> 0}
| x != 0 => {[x, 2] ** x :-> x0 ** sll(x0) ** x + 1 :-> xnxt **
[output, 2] ** func append(x, outputnxt, output) ** output + 1 :-> outputnxt **
flatten(xnxt, outputnxt)} }
\end{lstlisting}

With such a function predicate, though we can synthesise the function
whose result stored in \lstinline{output} is the flattened list, the
list \lstinline{output} is containing a lot of intermediate values,
which is neither consistent with the definition in the source language
nor space efficient.

\begin{figure}[t]
  \centering
  \begin{mathpar}
    \inferrule[\tempfuncalloc]
      {
      \trans{\mcode{{\asn{\phi; x \pts a \osep P}}}}
      {\asn{\psi; \func\ f(e_1,\ldots,e_n,{x})\osep temp (x, 1) \osep Q}} {\mcode{\prog}}
      }
      {
      \mcode{
      \trans{\asn{\phi; P}}
      {\asn{\psi; \func\ f(e_1,\ldots,e_n,{x})\osep temp (x, 0) \osep Q}}
      {{let\ x\ =\ malloc(1)\ ;\ \prog}}
      }
      }
  \end{mathpar}
  \bigskip
  \begin{mathpar}
    \inferrule[\tempfuncfree]
      {
      \trans{\mcode{{\asn{\phi; P}}}}
      {\asn{\psi; Q}} {\mcode{\prog}}
      }
      {
      \mcode{\not\exists x\in Q\ \wedge
      \trans{\asn{\phi; P}}
      {\asn{\psi;  temp (x, 1) \osep Q}}
      {}
      }\\
      \mcode{let\ x0\ =\ *x\ ;\ type\_free(x0);\ free(x);\ \prog}
      }
  \end{mathpar}
  
  \caption{New allocating and deallocating rule for \lstinline{temp} in SSL }
  \label{fig:newalloc}
\end{figure}

  
  

The new rules consist of allocating  and deallocating rules (\autoref{fig:newalloc}). Based on the definition of the \func structure and the function predicate, the allocated locations are different, where the \lstinline{temp} location for \func is directly used; while the \lstinline{temp} location for function predicate should allocate a new location for function predicates. As for the deallocation, not only the \lstinline{temp} location(s) but also the content they point to should be deallocated. That is the reason we have the \lstinline{type_free} function, which is syntax sugar to deallocate the content of a location based on the type information. For example, if the type of the location is \lstinline{tree}, then the \lstinline{type_free} will deallocate the content of the location via \lstinline{tree_free} function, which is synthesised based on the SSL predicate \lstinline{tree} as follows.
\begin{lstlisting}[language=SynLang]
void tree_free(loc x)
  {tree(x)}
  {emp}
\end{lstlisting}
Specifically, if the location contains the value with type \lstinline{int}, then the \lstinline{type_free} will do nothing.
Thus, the function predicate with \lstinline{temp} is much better, in the sense that no extra space is used, and the synthesised function is consistent with the source language.

\begin{lstlisting}[language=SynLang]
predicate flatten(loc x, loc output){
| x == 0 => {output :-> 0}
| x != 0 => {[x, 2] ** x :-> x0 ** sll(x0) ** x + 1 :-> xnxt ** temp outputnxt
  ** flatten(xnxt, outputnxt) ** func append(x, outputnxt, output)} }
\end{lstlisting}

\subsection{Avoiding Excessive Heap Manipulation with Read-Only Locations}

The existing \suslik depends on the set theory to express the pure
relation. However, it is not trivial to automatically generate the
pure part of SSL specifications from the functional specifications. To
see why the set theory is needed, the following simple example shows
the functionality of the set theory, with \lstinline{sll_n} being the
\textbf{s}ingle-\textbf{l}inked \textbf{l}ist with \textbf{n}o set.

\begin{lstlisting}[language=SynLang]
predicate sll_n(loc x) {
|  x == 0        => {true; emp }
|  not (x == 0)  => { [x, 2] ** x :-> v ** (x + 1) :-> xnxt ** sll(xnxt) } }
predicate copy(loc x, loc y) {
|  x == 0        => {y == 0; emp }
|  not (x == 0)  => { [y, 2] ** y :-> v ** (y + 1) :-> ynxt ** [x, 2] ** x :-> v
     ** (x + 1) :-> xnxt ** copy(xnxt, ynxt) } }
\end{lstlisting}

While the intent of the function predicate \lstinline{copy} is to copy
the list \lstinline{x} to \lstinline{y}, without the set theory, the
output program will be somewhat surprising to see:

\begin{lstlisting}[language=SynLang]
{sll_n(x)}
void copy (loc x, loc y) {
  if (x == 0) {
  } else {
    let n = *(x + 1); copy(n, y); let y01 = *y; let y0 = malloc(2); *y = y0;
    *(y0 + 1) = y01; let vy = *y0 *x = vy; } }
{copy(x, y)}
\end{lstlisting}

The problem here is that, when we have the pure relation in the
predicate to indicate that the values are the same, the synthesiser
finds another possible way: instead of copying the value of
\lstinline{x} to \lstinline{y}, we can just change the value of x to
initial value \lstinline{vy} after \lstinline[language = c]{malloc}.
This is not the user intent, and the output program is not correct.
Turns out, the solution is not that difficult: we simply need add a
new kind of heaplet in the specification language, call
\textit{constant points-to}, which has a similar idea as read-only
borrows~\cite{costea2020concise}.
%
%
The only difference of the constant points-to from the original
\textit{points-to} heaplet is that the value of the location is
constant, which means that the \writer rule in SSL is not applicable. By this way, the extended \suslik will not consider the modification of the input location, thus provides the correctness mechanism (in \autoref{sec:soundness}) for the translation of Pika.

\section{Evaluation}
\label{sec:evaluation}

In this section, we evaluate \tool's expressiveness. A secondary objective is to
evaluate \tool's performance. The performance evaluation is done largely to put
\tool into context by comparing it to a prominent functional programming
language (Haskell). The main purpose of \tool is to increase the expressiveness
of SuSLik, which is the reason for the primary evaluation objective.
Towards these goals, we answer the following research questions:

\begin{itemize}
\item \RQ{1}: Is the performance of the synthesised code competitive
  with code generated from traditional functional language compilers?

\item \RQ{2}: In concrete terms, how does \theFnLang's expressivity
  compare with the expressivity of \suslik specifications for programs
  written in a functional style?

  \item \RQ{3}: What are the failure modes of our approach?
\end{itemize}

Our implementation and benchmarks are available in supplementary material. The experiments were conducted on a 2021 MacBook Pro with an M1 processor and 32 GB of RAM. We used GHC version 9.8.1 and Apple Clang version 13.0.0.

For \RQ{1}, we run benchmarks and compare execution time. The benchmarks we selected are a series of functions that manipulate data structures such as lists and trees, which covers different common abstract operations (like \code{map}, \code{filter}, \code{fold}). 
The comparison is between Haskell functions based on user-defined data
structures and C functions generated from \tool's specifications (via
the extended \suslik), parameterised with the data structures of large size. We recorded both execution time with and
without optimisation of compilers.


The results are shown in \autoref{fig:performance}. Our findings are as follows:

\begin{itemize}
\item When compiled to an executable run without optimisations, the C
  programs generated by \tool are faster than Haskell programs. And we can also observe that the
  speed difference is larger for functions with more complex data
  structures.
  \item  When compiled to an executable with optimisations:

  \begin{itemize}
    \item For functions with complex data structures, the comparison
      is similar to the case without optimisation.

    \item For functions for the singly-linked list GHC's optimisation
      is very powerful, resulting in much better performance
      than C programs generated by \tool. 
      With more tests, we found out the performance after GHC's
      optimisation is similar to the one using Haskell's built-in list.
      We believe this is because GHC's optimisations are fine-tuned to
      optimise code manipulating list-like data structures and
      use the same optimisation as for Haskell's built-in list. That
      said, similar observations regarding GHC do not hold on other
      complex data structures, such as trees.
  \end{itemize}
\end{itemize}

\begin{figure}[t]
  \begin{center}
    \begin{subfigure}[b]{0.49\textwidth}
      \centering
      \includegraphics[width=\textwidth]{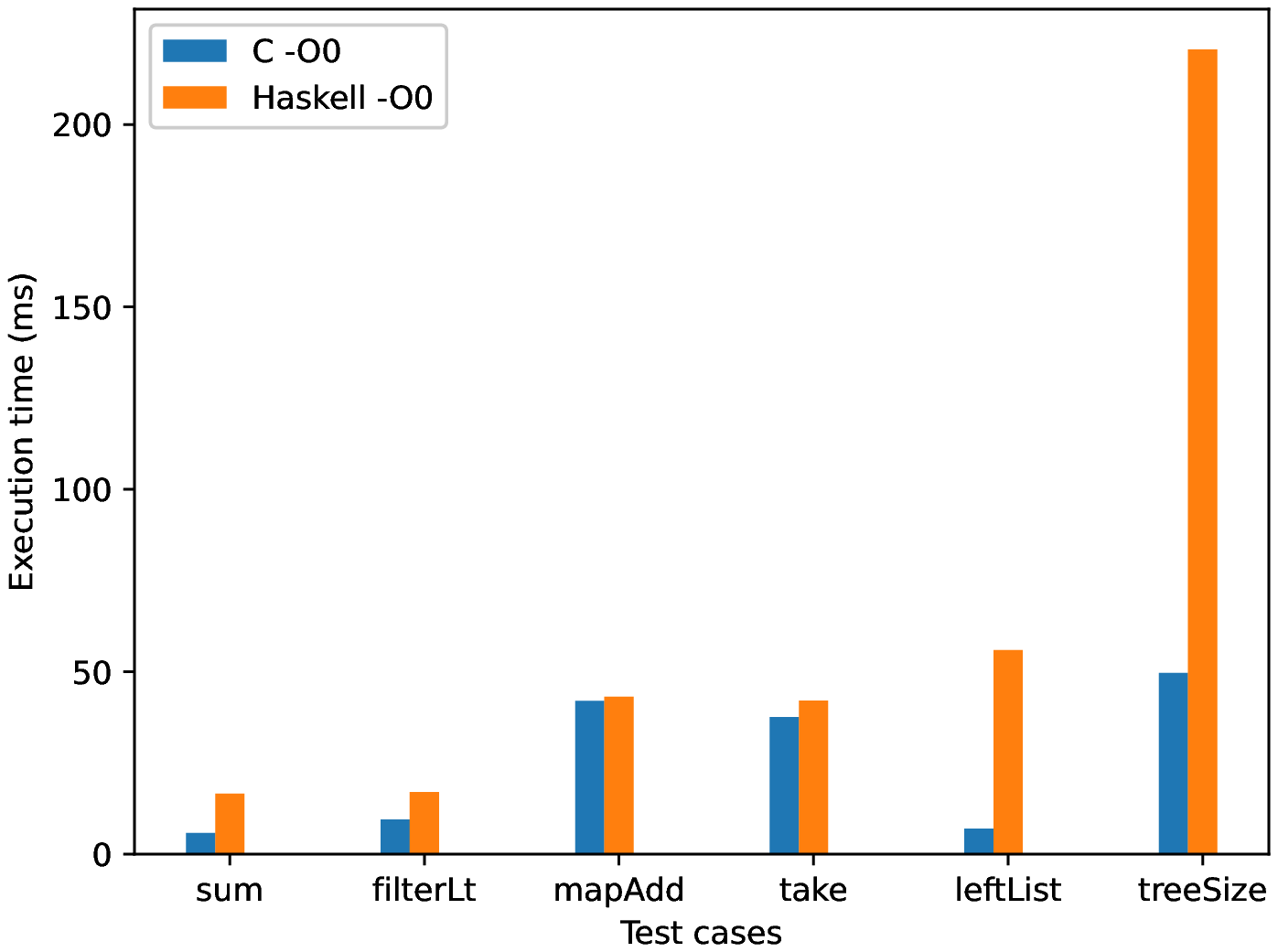}
      \caption{Benchmarks without optimizations}
    \end{subfigure}
    \begin{subfigure}[b]{0.49\textwidth}
      \centering
      \includegraphics[width=\textwidth]{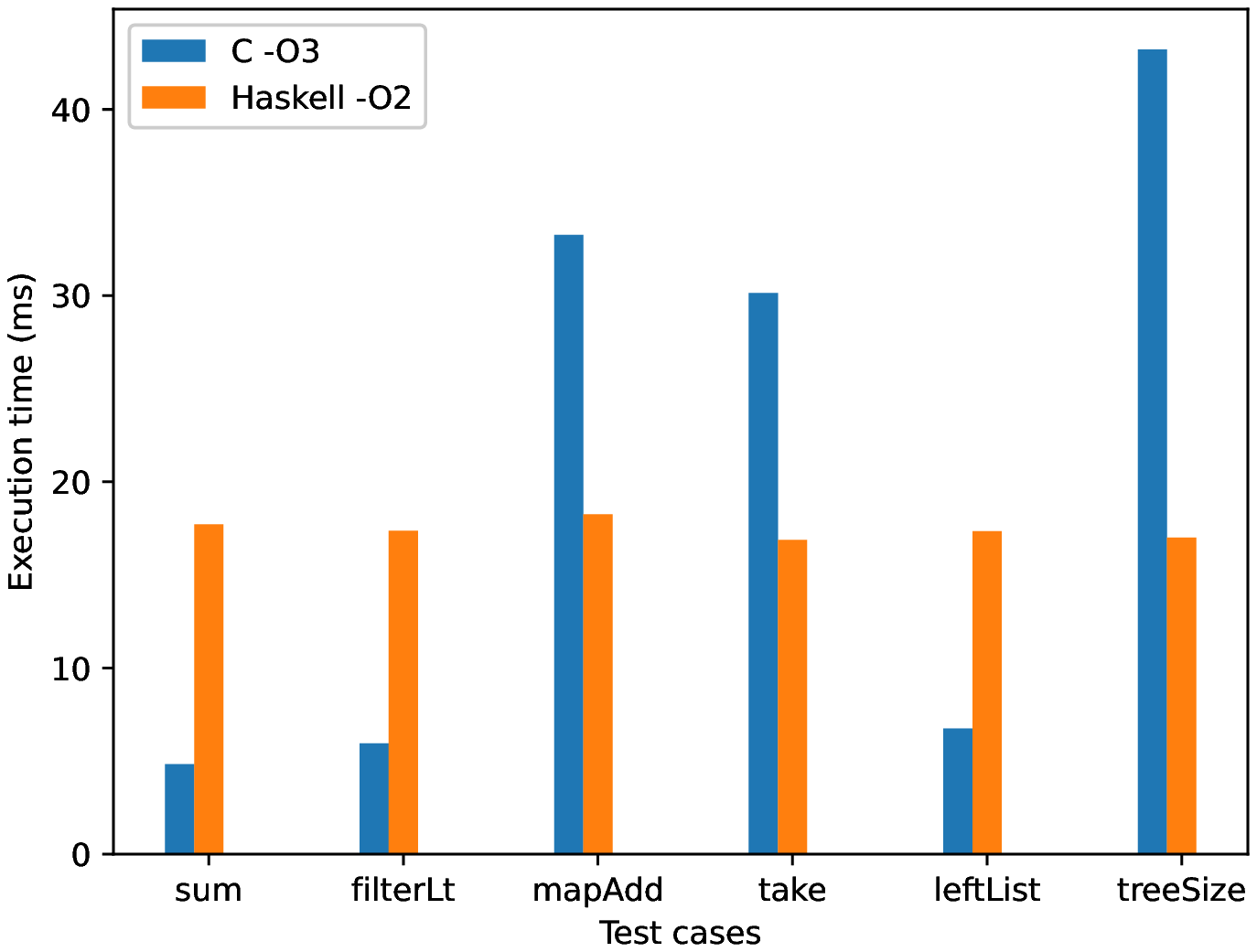}
      \caption{Benchmarks with optimizations}
    \end{subfigure}
  \end{center}
  \caption{Performance of C functions generated by \tool compared to
    Haskell/GHC}
  \label{fig:performance}
\end{figure}

\begin{figure}
\begin{center}
  \begin{tabular}{l@{\ \ }c@{\ \ }c@{\ \ }c@{\ \ }c@{\ \ }c@{\ \ }c}
\toprule 
\# & Task Name & \tool AST & \suslik AST & $\frac{|\text{\tool
                                    AST}|}{|\text{\suslik AST}|}$ &
                                                                 Compile Time & Synthesis Time
\\
\midrule
1 & \verb|cons| & 76 & 123 & 0.618 & 0.012 & 5.134\\
2 & \verb|plus| & 46 & 91 & 0.505 & 0.002 & 5.992\\
3 & \verb|add1Head| & 64 & 109 & 0.587 & 0.006 & 5.114\\
4 & \verb|listId| & 62 & 107 & 0.579 & 0.005 & 4.906\\
5 & \verb|add1HeadDLL| & 74 & 146 & 0.507 & 0.007 & 10.618\\
6 & \verb|even| & 19 & 42 & 0.452 & 0.001 & 3.999\\
7 & \verb|foldr| & 63 & 113 & 0.558 & 0.005 & 5.454\\
8 & \verb|sum| & 58 & 103 & 0.563 & 0.004 & 5.125\\
9 & \verb|filterLt| & 84 & 147 & 0.571 & 0.009 & 6.104\\
10 & \verb|mapAdd| & 71 & 116 & 0.612 & 0.007 & 5.031\\
11 & \verb|leftList| & 116 & 156 & 0.744 & 0.008 & 7.722\\
12 & \verb|treeSize| & 78 & 126 & 0.619 & 0.007 & 5.980\\
13 & \verb|take| & 119 & 212 & 0.561 & 0.012 & 10.447\\
\bottomrule
\end{tabular}

\end{center}
  \caption{Statistics on the benchmarks: \tool spec size v, generated SSL spec size, translator
    performance, and synthesiser performance. All times are in seconds.}
  \label{fig:size-comparison}
\end{figure}


To answer \RQ{2}, we first find some common patterns for \tool programs in the benchmarks shown in \autoref{fig:size-comparison}: (1) pattern matching on ADTs (\#9, 10, etc.), (2) code reusability (\#3 vs 5, \#7 vs 8). Those features are not directly expressible in \suslik because of the low-level nature of SSL. For example, the \verb|add1Head| and \verb|add1HeadDLL| functions share the same function definition, where the only difference is the type layout used; but the \suslik specifications need to be treated separately, which makes the codes more complex.
To make some objective observations on the expressivity, we measure the number
of nodes in the input \tool AST and find it consistently fewer than the number of AST
nodes in the generated \suslik specification.

To address \RQ{3}, note that a particular failure mode occurs when
\tool source code reuses a variable in a way that the constraints
violate separation logic. For example, see the \verb|selfAppend|
example in \autoref{fig:fail} which uses its argument twice. 
We could have addressed this issue by introducing a lightweight linear
type system to \tool, but have not carried out this exercise yet.
Another kind of failure occurs when \suslik fails to synthesise an
implementation of the generated specification: handling these
failures is beyond the scope of this work.

\begin{figure}[t]
\setlength{\abovecaptionskip}{5pt}
\begin{center}
\begin{lstlisting}
selfAppend : List -> List;
selfAppend xs := instantiate [Sll, Sll] Sll append xs xs;

append : List -> List -> List;
append (Nil) ys := ys;
append (Cons x xs) ys := instantiate [Int, Sll[mutable]] Sll cons
                                       (addr x) (append xs ys);
\end{lstlisting}
\end{center}
\caption{A \tool specification not supported by \suslik.}
  \label{fig:fail}
\end{figure}


\section{Discussion}

We have given a translation from a high-level functional language into
SSL specifications to be given to a program synthesiser.

In doing so, we have revealed a close connection between, on one hand,
algebraic data types and recursive pattern-matching functions and, on
the other hand, SSL inductive predicates. The soundness of this
connection is demonstrated by \autoref{thm:gen-soundness} in
\autoref{sec:formal}.

Beyond the theory, this connection can be exploited in three directions:

\begin{itemize}
  \item Increased type safety: Algebraic data types allow you to distinguish between types that have the same
    runtime heap representation.

  \item More reusability: An algebraic data type can have multiple layouts, each of which gives a different possible
    runtime heap representation for the ADT. As \tool functions are defined only in terms of algebraic data types, they naturally get the \emph{polymorphism} of the ADT by being able to work with any layout of the ADT. 

  \item Greater succinctness: When working at this higher level of abstraction, it generally takes less code
    as you are not frequently manipulating heap locations. The only mention of heap locations is in reusable layout
    definitions. This can also give greater clarity.
\end{itemize}


\section{Related Work}

\tool is built upon the \suslik synthesis framework. \suslik provides
a synthesis mechanism for heap-manipulating programs using a variant
of Separation Logic~\cite{polikarpova:2019:suslik}. However, it does
not have any high-level abstractions. In particular, writing \suslik
specifications directly involves a significant amount of pointer
manipulation. Further, it does not provide abstraction over specific
memory layouts. As described in \autoref{sec:language}, \tool
addresses these limitations.

\tname{Dargent} language~\cite{chen:2023:dargent} also includes a
notion of layouts and layout polymorphism for a class of algebraic
data types, which differs from our treatment of layouts in two primary
ways:

\begin{enumerate}

\item In \tool, abstract memory locations (with offsets) are used. In
  contrast, \tname{Dargent} uses offsets that are all
  relative to a single ``beginning'' memory location. The \tool
  approach is more amenable to heap allocation, though
  this requires a separate memory manager of some kind. This is
  exposed in the generated language with \verb|malloc| and
  \verb|free|.
  On the other hand, the technique taken by \tname{Dargent} allows
  for greater control over memory management. This makes
  dealing with memory more complex for the programmer, but it is no
  longer necessary to have a separate memory manager.

  \item Algebraic data types in the present language include
    \emph{recursive} types and,
    as a result, \tool has recursive layouts for these ADTs. This
    feature is not currently available in \tname{Dargent}.
  \end{enumerate}

Furthermore, layout polymorphism also works differently. While
\tname{Dargent} tracks layout instantiations at a type-level with type
variables, in the present work we simply only check to see if a layout
is valid for a given type when type-checking. In particular, we cannot
write type signatures that \textit{require} the same layout in
multiple parts of the type (for instance, in a function type
\verb|List -> List| we have no way at the type-level of requiring that
the argument \verb|List| layout and the result \verb|List| layout are
the same). This more rudimentary approach that \tool currently takes
could be extended in future work.
Overall, the examples in the \tname{Dargent} paper tend to focus on
the manipulation of integer values. In contrast, we have focused
largely on data structure manipulation, which follow the primary
motivation of \suslik.

\tname{Synquid} is another synthesis framework with a functional
surface language. While \tname{Synquid} allows an even higher-level
program specification than \tool through its liquid types, it does not
provide any access to low-level data structure
representation.~\cite{polikarpova:2016:synquid} In contrast, \tool's
level of abstraction is similar to that of a traditional functional
language but, similar to \tname{Dargent}, it also allows control over
the data structure representation in memory.


\section{Future Work and Conclusion}

We make the following observations based on our experience of
developing and using \tool.


By allowing layouts to use multiple SSL parameters, we would be able
to give a greater variety of layouts associated with an ADT. For
instance, the \verb|List| data type used in the examples could have a
doubly-linked list layout in addition to the singly-linked list layout
\verb|Sll|. Note that any existing \tool function defined over
\verb|List| will continue to work with no modification with these new
layouts. Defunctionalisation and lambda lifting can also be used to
implement true higher-order functions.

It is possible to do inference of some layouts, for example in
\verb|mapAdd1| we would usually want to use the same layout as the
argument layout, but we leave this for future work. Another approach
is to introduce type variables that correspond to layouts, as done in
the series of works on the \tname{Dargent}
tool~\cite{chen:2023:dargent}. We leave this approach for future work as well.

Reverse transformation deserves further investigation: if we go from
an SSL specification to \tool program and then compile to, \eg, C, can
we synthesise additional programs that traditional SSL synthesisers
would struggle with? What are the limitations of this approach?



We may be able to expose more of the synthesis mechanism in the \tool language. For example, generate an SSL specification given only a \tool type
signature (and corresponding \generate{} directive). This could
combine well with additional polymorphism, as we could utilise the
free theorems that are given by a polymorphic type signature to further
constrain the resulting specification.

Finally, is it possible to derive translations for languages such as \tool
from abstract machine semantics? In this paper, we have given a
language with abstract machine semantics. We then give a translation
of that language into SSL. We then show that the final states given by
the abstract machine semantics are models for the SSL propositions
produced by our translation. But is it possible to begin by specifying the
abstract machine semantics and then mathematically (or automatically)
\textit{derive} an appropriate translation into SSL, with the
requirement that the translation satisfies the soundness theorem?

In conclusion, we have presented \tool: a high-level functional
specification language that paves the way for the efficient synthesis of a
verifiably correct imperative code with in-place memory updates that
is comparable in efficiency to the handwritten C.

\bibliography{alex}

\appendix

\section{Implementation}

The translation occurs as a sequence of several stages:
\begin{enumerate}
  \item \stage{Type checking and elaboration}. This stage does type-checking. It also
  annotates the abstract syntax tree with the fresh names that will be used in the resulting SuSLik code. The
  number of SuSLik names needed for a given layout is determined during type checking, which is why this
  annotation is performed at this stage. \verb|instantiate|s are also instantiated for recursive calls during
  this stage.

  \item \stage{Unfold empty constructors}. Constructors that are assigned \verb|emp| by a layout need to be
  handled specially. This is because an \verb|== 0| constraint must be generated at the end, as opposed to just a \verb|emp|
  heaplet.

  \item \stage{Unfold pattern matches using layouts}. This stage partially translates a pattern match into
  its SuSLik equivalent. The AST is annotated with the heaplets corresponding to the layout branch for the
  particular constructor in the pattern match.
  The other part of the pattern match translation, the generation of the corresponding Boolean conditionals, is done during the final stage.

  \item \stage{Insert copying predicate applications}.

  \item \stage{Translate \texttt{let}s}. Each \verb|let| is turned into a SuSLik equality constraint.

  \item \stage{Unfold constructor applications}. In the expressions that constitute the function body,
  translate the constructor applications into heaplets using the specified layouts. Function calls are translated
  into the corresponding inductive predicate applications. For base type expressions (that is, of type \verb|Int| or \verb|Bool|),
  pure SuSLik constraints are generated. During this phase, \verb|if-then-else| expressions are translated into SuSLik equality constraints
  with \verb|... ? ... : ...| on the right-hand side of the equalities.

  \item \stage{Generation}. The Boolean conditions corresponding to each pattern match are generated and the final inductive predicate
  is generated.
\end{enumerate}
In Appendix~\ref{sec:ImplEx} the \verb|filterLt9| example is used to go through each stage, step-by-step.


\section{Algebraic Data Type and Layout Definitions For Examples}
\label{sec:example-defs}

\begin{lstlisting}
data List := Nil | Cons Int List;
data Tree := Leaf | Node Int Tree Tree;
data ListOfLists := LNil | LCons List ListOfLists;
data Zipped := ZNil | ZCons Int Int Zipped;

Sll : List >-> layout[x];
Sll (Nil) := emp;
Sll (Cons head tail) := x :-> head, (x+1) :-> tail, Sll tail;

TreeLayout : Tree >-> layout[x];
TreeLayout (Leaf) := emp;
TreeLayout (Node payload left right) :=
  x :-> payload,
  (x+1) :-> left,
  (x+2) :-> right,
  TreeLayout left,
  TreeLayout right;

ListOfListsLayout : ListOfLists >-> layout[x];
ListOfListsLayout (LNil) := emp;
ListOfListsLayout (LCons head tail) :=
  x :-> head, (x+1) :-> tail,
  ListOfListsLayout tail,
  Sll head;

ZippedLayout : Zipped >-> layout[x];
ZippedLayout (ZNil) := emp;
ZippedLayout (ZCons fst snd rest) :=
  x :-> fst,
  (x+1) :-> snd,
  (x+2) :-> rest,
  ZippedLayout rest;
\end{lstlisting}

\section{Reference Implementation for Examples}
\label{sec:ref-impls}

\subsection{filterLt9}

\fsImpl

\begin{lstlisting}
%generate filterLt9 [Sll] Sll

filterLt9 : List -> List;
filterLt9 (Nil) := Nil;
filterLt9 (Cons head tail)
  | head < 9       := filterLt9 tail;
  | not (head < 9) := Cons head (filterLt9 tail);
\end{lstlisting}

\genSuSLik

\begin{lstlisting}
predicate filterLt9__rw_Sll__ro_Sll(loc __p_x0, loc __r_x) {
| (__p_x0 == 0) => { __r_x == 0 ; emp }
| ((not (__p_x0 == 0)) && (head < 9)) => {
    __p_x0 :-> head ** (__p_x0+1) :-> tail ** [__p_x0,2] **
    filterLt9__rw_Sll__ro_Sll(tail, __r_x) }
| ((not (__p_x0 == 0)) && (not (head < 9))) => {
    __p_x0 :-> head ** (__p_x0+1) :-> tail ** [__p_x0,2] **
    filterLt9__rw_Sll__ro_Sll(tail, __p_x3) **
    __r_x :-> head ** (__r_x+1) :-> __p_x3 ** [__r_x,2] }
}
\end{lstlisting}

\refSuSLik

\begin{lstlisting}
predicate filter_base2(loc x, loc y){
| y == 0 => {x == 0 ; emp}
| (not (y == 0)) && (vy < 9)   => {
    [y,2] ** y :-> vy ** (y + 1) :-> ynxt **
    filter_base2(x, ynxt)}
| (not (y == 0)) && (not (vy < 9))   => {
     [y, 2] ** y :-> vy ** (y + 1) :-> ynxt **
     [x, 2] ** x :-> vy ** (x + 1) :-> xnxt **
     filter_base2(xnxt, ynxt)}
}
\end{lstlisting}

\subsection{fold}

\fsImpl

\begin{lstlisting}
%generate fold_List [Int, Sll] (Ptr Int)

fold_List : Int -> List -> Ptr Int;
fold_List z (Nil) := z;
fold_List z (Cons x xs) :=
  instantiate
    [Ptr Int, Ptr Int]
    (Ptr Int)
    f
    (addr x)
    (fold_List z xs);
\end{lstlisting}

\genSuSLik

\begin{lstlisting}
predicate fold_List__Ptr_Int__Int__ro_Sll(int __p_0, loc __p_x1
                                         ,loc __r) {
| (__p_x1 == 0) => { __r :-> __p_0 }
| (not (__p_x1 == 0)) => {
  __p_x1 :-> x ** (__p_x1+1) :-> xs ** [__p_x1,2] **
  func f__Ptr_Int__Ptr_Int__Ptr_Int(__p_x1, __p_2, __r) **
  fold_List__Ptr_Int__Int__ro_Sll(__p_0, xs, __p_2)
  ** temploc __p_2 }
}
\end{lstlisting}

\refSuSLik

\begin{lstlisting}
predicate fold_sll(int init, loc x, loc ret){
|  x == 0 => { func sll_copy(init, ret)}
|  not (x == 0) => {
    [x, 2] ** x :-> v ** (x + 1) :-> xnxt ** temploc t **
    func f(x, t, ret) **
    fold_sll(init, xnxt, t)
}
}
\end{lstlisting}

\subsection{maximum}

\fsImpl

\begin{lstlisting}
%generate maximum [Sll] Int

maximum : List -> Int;
maximum (Nil) := 0;
maximum (Cons x xs) :=
  let i := maximum xs
  in
  if i < x
    then x
    else i;
\end{lstlisting}

\genSuSLik

\begin{lstlisting}
predicate maximum__Int__ro_Sll(loc __p_x0, int __r) {
| (__p_x0 == 0) => { __r == 0 ; emp }
| (not (__p_x0 == 0)) => {
      i == __p_1 && __r == ((i < x) ? x : i) && __temp_0 == __p_1
        ;
      __p_x0 :-> x ** (__p_x0+1) :-> xs ** [__p_x0,2] **
      maximum__Int__ro_Sll(xs, __temp_0) }
}
\end{lstlisting}

\refSuSLik

\begin{lstlisting}
predicate maximum(loc x, int ret) {
| x == null => {ret == 0; emp}
| not (x == null) => {
    ret == (ret0 <= v ? v : ret0);
    [x, 2] ** x :-> v ** x + 1 :-> xnxt **
    maximum(xnxt, ret0)
}
}
\end{lstlisting}

\section{More Examples}

\subsection{cons}

This example prepends an item to a list. In particular, we have the function instantiated with the singly-linked list layout, so
it will synthesize a function to perform this task on singly-linked lists.

\fsImpl
\begin{lstlisting}
%generate cons [Int, Sll] Sll

cons : Int -> List -> List;
cons x xs := Cons x xs;
\end{lstlisting}

\genSuSLik
\begin{lstlisting}
predicate cons__rw_Sll__Int__ro_Sll(int __p_0, loc __p_x1, loc __r_x)
{
| true => { __r_x :-> __p_0 ** (__r_x+1) :-> __p_x1 ** [__r_x,2] }
}
\end{lstlisting}

\refSuSLik
\begin{lstlisting}
predicate cons(int v, loc x, loc ret) {
| true => {[ret, 2] ** ret :-> v ** func cp(x, ret) + 1 ** sll_c(x)}
}
\end{lstlisting}

\subsection{car}
This example would get the head of a list. The synthesizer is currently synthesizing a recursive function which does
not have the correct behavior.

\fsImpl

\begin{lstlisting}
%generate car [Sll] Int

car : List -> Int;
car (Nil) := 0;
car (Cons x xs) := x;
\end{lstlisting}

\genSuSLik

\begin{lstlisting}
predicate car__Int__ro_Sll(loc __p_x0, int __r) {
| (__p_x0 == 0) => { __r == 0 ; emp }
| (not (__p_x0 == 0)) => {
    __r == x ; __p_x0 :-> x ** (__p_x0+1) :-> xs ** [__p_x0,2] }
}
\end{lstlisting}

\subsection{singleton}

A function which creates a singleton list out of the given integer argument.

\fsImpl

\begin{lstlisting}
%generate singleton [Int] Sll

singleton : Int -> List;
singleton x := Cons x (Nil);
\end{lstlisting}

\genSuSLik

\begin{lstlisting}
predicate singleton__rw_Sll__Int(int __p_0, loc __r_x) {
| true => { __r_x :-> __p_0 ** (__r_x+1) :-> 0 ** [__r_x,2] }
}
\end{lstlisting}

\refSuSLik

\begin{lstlisting}
predicate singleton_list(int n, loc x) {
| true => { [x, 2] ** x :-> n ** (x+1) :-> 0 }
}
\end{lstlisting}

\subsection{map}

\fsImpl

\begin{lstlisting}
%generate map [Sll] Sll

map : List -> List;
map (Nil) := Nil;
map (Cons x xs) := Cons (instantiate [Int] Int f x) (map xs);
\end{lstlisting}

\genSuSLik

\begin{lstlisting}
predicate map__rw_Sll__ro_Sll(loc __p_x0, loc __r_x) {
| (__p_x0 == 0) => { __r_x == 0 ; emp }
| (not (__p_x0 == 0)) => {
    __r_x == __p_1 ; __p_x0 :-> x ** (__p_x0+1) :-> xs **
    [__p_x0,2] ** func f__Int__Int(x, __p_1) **
    map__rw_Sll__ro_Sll(xs, __p_x3) **
    (__r_x+1) :-> __p_x3 ** [__r_x,2] }
}
\end{lstlisting}

\refSuSLik

\begin{lstlisting}
predicate sll_map(loc x, loc y) {
|  x == 0        => { y == 0 ; emp}
|  not (x == 0)  => {
     [x, 2] ** x :-> h ** (x + 1) :-> xnxt ** [y, 2] **
     func f(x, y) ** (y + 1) :-> ynxt **
     sll_map(xnxt, ynxt) }
}
\end{lstlisting}

\subsection{snoc}

This example prepends an item to a singly-linked list.

\fsImpl

\begin{lstlisting}
%generate snoc [Sll, Int] Sll

snoc : List -> Int -> List;
snoc (Nil) i := Cons i (Nil);
snoc (Cons x xs) i := Cons x (snoc xs i);
\end{lstlisting}

\genSuSLik

\begin{lstlisting}
predicate snoc__rw_Sll__ro_Sll__Int(loc __p_x0, int __p_1, loc __r_x)
{
| (__p_x0 == 0) => { __r_x :-> __p_1 ** (__r_x+1) :-> 0 **
                     [__r_x,2] }
| (not (__p_x0 == 0)) => {
    __p_x0 :-> x ** (__p_x0+1) :-> xs ** [__p_x0,2] **
    snoc__rw_Sll__ro_Sll__Int(xs, __p_1, __p_x3) **
    __r_x :-> x ** (__r_x+1) :-> __p_x3 ** [__r_x,2] }
}
\end{lstlisting}

\subsection{reverse}

This example reverses a singly-linked list.

\fsImpl

\begin{lstlisting}
%generate reverse [Sll] Sll

reverse : List -> List;
reverse (Nil) := Nil;
reverse (Cons x xs) :=
  instantiate [Sll[mutable], Int] Sll snoc (reverse xs) x;
\end{lstlisting}

\genSuSLik

\begin{lstlisting}
predicate reverse__rw_Sll__ro_Sll(loc __p_x0, loc __r_x) {
| (__p_x0 == 0) => { __r_x == 0 ; emp }
| (not (__p_x0 == 0)) => {
    __p_x0 :-> x ** (__p_x0+1) :-> xs ** [__p_x0,2] **
    func snoc__rw_Sll__rw_Sll__Int(__p_x1, x, __r_x) **
    reverse__rw_Sll__ro_Sll(xs, __p_x1) **
    temploc __p_x1 }
}
\end{lstlisting}

\refSuSLik

\begin{lstlisting}
predicate reverse(int init, loc x, loc ret){
|  x == 0 => { ret :-> init}
|  not (x == 0) => {
    [x, 2] ** x :-> v ** (x + 1) :-> xnxt ** temploc t **
    reverse(init, xnxt, t) ** func tail(t, x, ret)
}
}

predicate tail(loc x, int v, loc ret){
| x == null => {[ret, 2] ** ret :-> v ** ret + 1 :-> 0}
| not (x == null) => {
    [x, 2] ** x :-> v ** x+1 :-> xnxt **
    [ret, 2] ** ret :-> v ** ret+1 :-> retnxt **
    tail(xnxt, v, retnxt)
}
}
\end{lstlisting}

\subsection{append}

This example appends two singly-linked lists together.

\fsImpl

\begin{lstlisting}
%generate append [Sll, Sll] Sll

append : List -> List -> List;
append (Nil) ys := ys;
append (Cons x xs) ys :=
  instantiate [Ptr Int, Sll[mutable]] Sll
    cons
    (addr x)
    (append xs ys);
\end{lstlisting}

\genSuSLik

\begin{lstlisting}
predicate append__rw_Sll__ro_Sll__ro_Sll(loc __p_x0, loc __p_x1
                                        ,loc __r_x) {
| (__p_x0 == 0) => { func Sll__copy(__p_x1, __r_x) }
| (not (__p_x0 == 0)) => {
     __p_x0 :-> x ** (__p_x0+1) :-> xs ** [__p_x0,2] **
     func cons__rw_Sll__Ptr_Int__rw_Sll(__p_x0, __p_x2, __r_x) **
     append__rw_Sll__ro_Sll__ro_Sll(xs, __p_x1, __p_x2) **
     temploc __p_x2 }
}
\end{lstlisting}

\refSuSLik

\begin{lstlisting}
predicate append(loc x1, loc x2, loc ret){
| x1 == 0 => {sll_c(x2) ** func sll_copy(x2, ret)}
| not (x1 == 0) => {
    [x1, 2] ** x1 :-> v ** x1 + 1 :-> x1nxt **
    temploc t ** append(x1nxt, x2, t) **
    func cons(x1, t, ret)

}
}
\end{lstlisting}

\subsection{zip}

This example takes two lists and zips them into a list of pairs.

\fsImpl

\begin{lstlisting}
%generate zip [Sll, Sll] ZippedLayout

zip : List -> Zipped;
zip (Nil) ys := ZNil;
zip xs (Nil) := ZNil;
zip (Cons x xs) (Cons y ys) := ZCons x y (zip xs ys);
\end{lstlisting}

\genSuSLik

\begin{lstlisting}
predicate zip__rw_ZippedLayout__ro_Sll__ro_Sll(loc __p_x0, loc __p_x1
                                              ,loc __r_x) {
| (__p_x0 == 0) => { __r_x == 0 ; emp }
| (__p_x1 == 0) => { __r_x == 0 ; emp }
| ((not (__p_x0 == 0)) && (not (__p_x1 == 0))) => {
     __p_x0 :-> x ** (__p_x0+1) :-> xs ** [__p_x0,2] **
     __p_x1 :-> y ** (__p_x1+1) :-> ys ** [__p_x1,2] **
     zip__rw_ZippedLayout__ro_Sll__ro_Sll(xs, ys, __p_x2) **
     __r_x :-> x ** (__r_x+1) :-> y ** (__r_x+2) :-> __p_x2 **
     [__r_x,3] }
}
\end{lstlisting}

\subsection{zipWith}

This example is similar to \verb|zip|, but instead of pairing it applies a binary
function pairwise to the items from the two lists.

\fsImpl

\begin{lstlisting}
%generate zipWith [Sll, Sll] Sll

zipWith : List -> List -> List;
zipWith (Nil) (Nil) := Nil;
zipWith (Nil) (Cons y ys) := Nil;
zipWith (Cons x xs) (Nil) := Nil;
zipWith (Cons x xs) (Cons y ys) :=
  Cons
    (instantiate [Int, Int] Int f x y)
    (zipWith xs ys);
\end{lstlisting}

\genSuSLik

\begin{lstlisting}
predicate zipWith__rw_Sll__ro_Sll__ro_Sll(loc __p_x0, loc __p_x1
                                         ,loc __r_x) {
| ((__p_x0 == 0) && (__p_x1 == 0)) => { __r_x == 0 ; emp }
| ((__p_x0 == 0) && (not (__p_x1 == 0))) => { __r_x == 0
                                              ; ro_Sll(__p_x1) }
| ((__p_x1 == 0) && (not (__p_x0 == 0))) => { __r_x == 0
                                              ; ro_Sll(__p_x0) }
| ((not (__p_x0 == 0)) && (not (__p_x1 == 0))) => {
   __r_x == __p_2
     ;
   __p_x0 :-> x ** (__p_x0+1) :-> xs ** [__p_x0,2] **
   __p_x1 :-> y ** (__p_x1+1) :-> ys ** [__p_x1,2] **
   func f__Int__Int__Int(x, y, __p_2) **
   zipWith__rw_Sll__ro_Sll__ro_Sll(xs, ys, __p_x5) **
   (__r_x+1) :-> __p_x5 ** [__r_x,2] }
}
\end{lstlisting}

\refSuSLik

\begin{lstlisting}
predicate zip_withf(loc x, loc y, loc r) {
| x == 0 && y == 0 => { r == null ; emp }
| y == 0 && not (x == 0) => { r == null ; sll_c(x) }
| x == 0 && not (y == 0) => { r == null ; sll_c(y) }
| (not (x == 0)) && (not (y == 0)) => {
    [x, 2] ** x :-> a ** (x+1) :-> nxtX **
    [y, 2] ** y :-> b ** (y+1) :-> nxtY **
    [r, 2] ** func f(x, y, r) ** (r+1) :-> nxtR **
    zip_withf(nxtX, nxtY, nxtR)
  }
}
\end{lstlisting}

\subsection{scanr}

This is similar to a right fold, but it collects all of the intermediate results
in the result list.

\fsImpl

\begin{lstlisting}
%generate scanr[Int, Sll] Sll

scanr : Int -> List -> List;
scanr z (Nil) := Cons z (Nil);
scanr z (Cons x xs) :=
  let xs2 := scanr z xs
  in
  let x2 := instantiate [Int, Int] Int f x
              (instantiate [Sll[mutable]] Int head xs2)
  in
  Cons x2 xs2;
\end{lstlisting}

\genSuSLik

\begin{lstlisting}
predicate scanr__rw_Sll__Int__ro_Sll(int __p_0, loc __p_x1
                                    ,loc __r_x) {
| (__p_x1 == 0) => { __r_x :-> __p_0 ** (__r_x+1) :-> 0 **
                     [__r_x,2] }
| (not (__p_x1 == 0)) => {
    xs2 == __p_x3 && x2 == __p_7
      ;
    __p_x1 :-> x ** (__p_x1+1) :-> xs ** [__p_x1,2] **
    scanr__rw_Sll__Int__ro_Sll(__p_0, xs, __p_x3) **
    func f__Int__Int__Int(x, __p_8, __p_7) **
    func head__Int__rw_Sll(__p_x3, __p_8) **
    __r_x :-> x2 ** (__r_x+1) :-> __p_x3 ** [__r_x,2] }
}
\end{lstlisting}

\refSuSLik

\begin{lstlisting}
predicate scanr_sum(int init, int tmp, loc x, loc ret){
|  x == 0 => {init == tmp ; [ret, 2] **  ret :-> init **
                            ret + 1 :-> 0}
|  not (x == 0) => { tmp == v + tmp0;
    [x, 2] ** x :-> v ** (x + 1) :-> xnxt **
    [ret, 2] ** ret :-> tmp ** ret + 1 :-> retnxt **
    scanr_sum(init, tmp0, xnxt, retnxt)
}
}
\end{lstlisting}

\subsection{selfAppend}

This example would append a list to itself. Due to the fact that there is currently
no exposed copy operation, the fact that this function refers to its argument multiple
times prevents it from synthesizing.

\fsImpl

\begin{lstlisting}
%generate selfAppend[Sll] Sll

selfAppend : List -> List;
selfAppend xs :=
  instantiate [Sll, Sll] Sll append xs xs;
\end{lstlisting}

\genSuSLik

\begin{lstlisting}
predicate selfAppend__rw_Sll__ro_Sll(loc __p_x0, loc __r_x) {
| true => {
    func append__rw_Sll__ro_Sll__ro_Sll(__p_x0, __p_x0, __r_x) }
}
\end{lstlisting}

\subsection{take}

This would return the first \verb|n| items of the given list. In the higher-level implementation, we 
use a machine \verb|Int| for the count argument. As SuSLik does not currently have the capability to
reason inductively about machine \verb|int|s, this does not synthesize. This limitation could possibly be
addressed by using an ADT unary representation for natural numbers.

\fsImpl

\begin{lstlisting}
%generate take [Int, Sll] Sll

take : Int -> List -> List;
take i (Nil) := Nil;
take i (Cons head tail)
  | i == 0 := Nil;
  | not (i == 0) := Cons head (take (i - 1) tail);
\end{lstlisting}

\genSuSLik

\begin{lstlisting}
predicate take__rw_Sll__Int__ro_Sll(int __p_0, loc __p_x1, loc __r_x)
{
| (__p_x1 == 0) => { __r_x == 0 ; emp }
| ((not (__p_x1 == 0)) && (__p_0 == 0)) => { __r_x == 0
                                               ; ro_Sll(__p_x1) }
| ((not (__p_x1 == 0)) && (not (__p_0 == 0))) => {
     __p_x1 :-> head ** (__p_x1+1) :-> tail ** [__p_x1,2] **
     take__rw_Sll__Int__ro_Sll((__p_0 - 1), tail, __p_x2) **
     __r_x :-> head ** (__r_x+1) :-> __p_x2 ** [__r_x,2] }
}
\end{lstlisting}

\refSuSLik

\begin{lstlisting}
predicate take(loc x, int n, int ret){
| true => {
    temploc t ** func number(n, t) ** func take_helper(x, t, ret)
}
}


predicate take_helper(loc x, loc helper, int ret){
| x == 0 => { ret == 0; sll_c(helper)}
| not (x == 0) && helper == 0 => { ret == v;
    [x, 2] ** x :-> v ** x + 1 :-> xnxt ** sll_c(xnxt)
}
| not (x == 0) && not (helper == 0) => {
    [x, 2] ** x :-> v ** x + 1 :-> xnxt ** 
    [helper, 2] ** helper :-> vv ** helper + 1 :-> helpernxt **
    take_helper(xnxt, helpernxt, ret)
}
}
\end{lstlisting}

\subsection{replicate}

This has similar problems as \verb|take|: using a machine \verb|Int| for a count argument
instead of a unary representation.

\fsImpl

\begin{lstlisting}
%generate replicate [Int, Int] Sll

replicate : Int -> Int -> List;
replicate n i
  | n == 0 := Nil;
  | not (n == 0) := Cons i (replicate (n - 1) i);
\end{lstlisting}

\genSuSLik

\begin{lstlisting}
predicate replicate__rw_Sll__Int__Int(int __p_0, int __p_1
                                     ,loc __r_x) {
| (__p_0 == 0) => { __r_x == 0 ; emp }
| (not (__p_0 == 0)) => {
    replicate__rw_Sll__Int__Int((__p_0 - 1), __p_1, __p_x2) **
    __r_x :-> __p_1 ** (__r_x+1) :-> __p_x2 ** [__r_x,2] }
}
\end{lstlisting}

\refSuSLik

\begin{lstlisting}
predicate replicate_helper(int n, loc helper, loc ret){
| helper == 0 => {ret == 0 ; emp}
| not (helper == 0) => {
    [helper, 2] ** helper :-> vv ** helper + 1 :-> helpernxt **
    [ret, 2] ** ret :-> n ** ret + 1 :-> retnxt **
    replicate_helper(n, helpernxt, retnxt)
}
}
\end{lstlisting}

\subsection{foldMap}

This example maps a function over a list and then folds the result. This demonstrates nested function application.

\fsImpl

\begin{lstlisting}
%generate foldMap[Int, Sll[mutable]] Int

foldMap : Int -> List -> Int;
foldMap z (Nil) := z;
foldMap z (Cons x xs) :=
  instantiate [Int, Sll[mutable]] Int fold_List
     z
     (instantiate [Sll] Sll map
        (lower Sll (Cons x xs)));
\end{lstlisting}

\genSuSLik

\begin{lstlisting}
predicate foldMap__Int__Int__rw_Sll(int __p_0, loc __p_x1, int __r) {
| (__p_x1 == 0) => { __r == __p_0 ; emp }
| (not (__p_x1 == 0)) => {
     __p_x1 :-> x ** (__p_x1+1) :-> xs ** [__p_x1,2] **
     func fold_List__Int__Int__rw_Sll(__p_0, __p_x2, __r) **
     func map__rw_Sll__ro_Sll(__p_x2, __p_x2) **
     __p_x2 :-> x ** (__p_x2+1) :-> xs ** [__p_x2,2] **
     temploc __p_x2 }
}
\end{lstlisting}

%

\subsection{mapSum}
\label{sec:examples-mapSum}

This example takes a list of lists of \verb|Int|s and sums up each inner list.

\fsImpl

\begin{lstlisting}
%generate map_sum [ListOfListsLayout] Sll

map_sum : ListOfLists -> List;
map_sum (LNil) := Nil;
map_sum (LCons xs xss) :=
  Cons (instantiate [Sll] Int sum xs)
       (map_sum xss);
\end{lstlisting}

\genSuSLik

\begin{lstlisting}
predicate map_sum__rw_Sll__ro_ListOfListsLayout(loc __p_x0
                                               ,loc __r_x) {
| (__p_x0 == 0) => { __r_x == 0 ; emp }
| (not (__p_x0 == 0)) => {
     __r_x == __p_1
       ;
     __p_x0 :-> xs ** (__p_x0+1) :-> xss ** ro_Sll(xs) **
     [__p_x0,2] **
     func sum__Int__ro_Sll(xs, __p_1) **
     map_sum__rw_Sll__ro_ListOfListsLayout(xss, __p_x4) **
     (__r_x+1) :-> __p_x4 ** [__r_x,2] }
}
\end{lstlisting}

\refSuSLik

\begin{lstlisting}
predicate map_sum(loc x, loc y) {
|  x == 0        => { y == 0 ; emp }
|  not (x == 0)  => {
    [x, 2] ** x :-> h ** (x + 1) :-> xnxt ** sll_c(h) **
    [y, 2] ** func sum(x, y) ** (y + 1) :-> ynxt
    ** map_sum(xnxt, ynxt) }
}
\end{lstlisting}

\subsection{leftList}

This example takes a binary tree and gives back a list of items along the path going
down only the left subtrees. This demonstrates the interaction between two different
data types.

\fsImpl

\begin{lstlisting}
%generate leftList [TreeLayout] Sll

leftList : Tree -> List;
leftList (Leaf) := Nil;
leftList (Node a b c) := Cons a (leftList b);
\end{lstlisting}

\genSuSLik

\begin{lstlisting}
predicate leftList__rw_Sll__ro_TreeLayout(loc __p_x0, loc __r_x) {
| (__p_x0 == 0) => { __r_x == 0 ; emp }
| (not (__p_x0 == 0)) => {
   __p_x0 :-> a ** (__p_x0+1) :-> b ** (__p_x0+2) :-> c **
   [__p_x0,3] **
   leftList__rw_Sll__ro_TreeLayout(b, __p_x1) **
   __r_x :-> a ** (__r_x+1) :-> __p_x1 ** [__r_x,2] }
}
\end{lstlisting}

\refSuSLik

\begin{lstlisting}
predicate left_tree(loc x, loc y){
|  x == 0        => { emp }
|  not (x == 0)  => {
    [x, 3] ** x :=> v ** x + 1 :=> l ** x + 2 :=> r ** tree_c(r) **
    [y, 2] ** y :-> v ** y + 1 :-> ynxt ** left_tree(l, ynxt)
}
}
\end{lstlisting}

\subsection{sum}

This is effectively a fold specialized to the addition function and 0.

\fsImpl

\begin{lstlisting}
%generate sum [Sll] Int

sum : List -> Int;
sum (Nil) := 0;
sum (Cons head tail) := head + (sum tail);
\end{lstlisting}

\genSuSLik

\begin{lstlisting}
predicate sum__Int__ro_Sll(loc __p_x0, int __r) {
| (__p_x0 == 0) => { __r == 0 ; emp }
| (not (__p_x0 == 0)) => {
    __r == (head + __p_1) && __temp_0 == __p_1
      ;
    __p_x0 :-> head ** (__p_x0+1) :-> tail ** [__p_x0,2] **
    sum__Int__ro_Sll(tail, __temp_0) }
}
\end{lstlisting}

\refSuSLik

\begin{lstlisting}
predicate sum(loc l, int output){
| l == null => {output == 0; emp}
| l != null => {output == output1 + v; [l, 2] ** l :-> v **
    l + 1 :-> lnxt ** sum(lnxt, output1)}
}
\end{lstlisting}


\section{Translation Stages Example}
\label{sec:ImplEx}

We use the \verb|filterLt9| example to illustrate each stage. An additional syntactic construct is
used in this example to illustrate the internal changes to the AST as the translation goes through its
stages. The syntax \texttt{layout\{A\} \& e} is a \fnLang{}
expression \texttt{e} annotated with the SuSLik assertion \texttt{A}.

\begin{lstlisting}
%generate filterLt9 [Sll[readonly]] Sll

data List := Nil | Cons Int List;

Sll : List >-> layout[x];
Sll Nil := emp;
Sll (Cons head tail) := x :-> head, (x+1) :-> tail, Sll tail

filterLt9 : List -> List;
filterLt9 p Nil      := Nil;
filterLt9 p (Cons head tail)
  | head < 9       := filterLt9 p tail;
  | not (head < 9) := Cons head (filterLt9 p tail);
\end{lstlisting}

\begin{enumerate}
  \item \stage{Type checking and elaboration}.
  \begin{lstlisting}
filterLt9 (Sll[readonly ; x] Nil) := lower Sll[readonly ; r] Nil;
filterLt9 (Sll[readonly ; x] (Cons head tail))
| head < 9       :=
    instantiate [Sll[readonly ; tail]] Sll[r] filter tail;
| not (head < 9) :=
    lower Sll[readonly ; r]
      (Cons head
        (instantiate [Sll[readonly ; tail]] Sll[y]
          filterLt9 tail));
  \end{lstlisting}

  \item \stage{Unfold empty constructors}.
  \begin{lstlisting}
filterLt9 (Sll[readonly ; x] Nil) := lower Sll[readonly ; r] 0;
filterLt9 (Sll[readonly ; x] (Cons head tail))
| head < 9       :=
    instantiate [Sll[readonly ; tail]] Sll[r] filter tail;
| not (head < 9) :=
    lower Sll[readonly ; r]
      (Cons head
        (instantiate [Sll[readonly ; tail]] Sll[y]
          filterLt9 tail));
  \end{lstlisting}

  \item \stage{Unfold pattern matches using layouts}.
  \begin{lstlisting}
filterLt9 Nil      :=
    layout{ r == 0 ; emp }
      & 0;
filterLt9 (Cons head tail)
  | head < 9       :=
        layout{ x :=> head, (x+1) :=> tail }
          & instantiate [Sll[readonly ; tail]] Sll[r] filter tail;
  | not (head < 9) :=
        layout{ x :=> head, (x+1) :=> tail }
          & lower Sll[readonly ; r]
              (Cons head
                (instantiate [Sll[readonly ; tail]] Sll[y]
                  filterLt9 tail));
  \end{lstlisting}

  \item \stage{Insert copying predicate applications}. Not applicable.

  \item \stage{Translate \texttt{let}s}. Not applicable.

  \item \stage{Unfold constructor applications}.
\begin{lstlisting}
filterLt9 Nil      := layout{ r == 0 ; emp };
filterLt9 (Cons head tail)
  | head < 9       :=
      layout{ x :=> head, (x+1) :=> tail
        , filterLt9__Sll_Sll[tail | r] tail));
  | not (head < 9) :=
      layout{ x :=> head, (x+1) :=> tail, r :-> head, (r+1) :-> y,
        filterLt9__Sll_Sll[tail | y] tail));
\end{lstlisting}

  \item \stage{Generation}.
\begin{lstlisting}
inductive Sll(loc x) {
| x == 0 => { emp }
| not (x == 0) => { x :-> head ** (x+1) :-> tail ** Sll(tail) }
}

inductive filterLt9__Sll_Sll(loc x, loc r) {
| x == 0 => { r == 0 ; emp }
| not (x == 0) && head < 9 => {
    x :=> head ** (x+1) :=> tail ** filterLt9__Sll_Sll(tail, r)
  }
| not (x == 0) && not (head < 9) => {
    x :=> head ** (x+1) :=> tail ** r :-> head ** (r+1) :-> y **
    filterLt9__Sll_Sll(tail, y)
  }
}
\end{lstlisting}
\end{enumerate}


\section{\fnLang{} Grammar}
\label{sec:grammar}

\begin{grammar}
<int> ::= $\cdots$ | `-1' | `0' | `1' | $\cdots$

<natural> ::= `0' | `1' | $\cdots$

<bool> ::= `false' | `true'

<expr> ::= <int> \alt <bool> \alt <var> \alt <expr> <expr>
  \alt <expr> `+' <expr> \alt <expr> `-' <expr> \alt <expr> `\&\&' <expr> \alt <expr> `||' <expr> \alt `not' <expr>
  \alt `instantiate' <layout-list> <layout-name> <var> \{ <expr> \}
  \alt `lower' <moded-layout> <expr>

<generate-directive> ::= `\%generate' <var> <layout-list> <layout-name>

<layout-list> ::= `[' \{ <moded-layout> \} `]'

<mode> ::= `readonly' \alt `mutable'

<moded-layout> ::= <layout-name> `[' <mode> `]'

<base-type> ::= `Int' \alt `Bool'

<type> ::= <base-type> \alt <var> \alt <type> \verb|->| <type>

<fn-decl> ::= <var> `:' <type> `;'

<fn-def> ::= <fn-case> `;' <fn-def> | <fn-case> `;'

<fn-case> ::= <var> <pattern> <guarded-bodies>

<guarded-bodies> ::= <guarded-body> <guarded-bodies> | <guarded-body>

<guarded-body> ::= `|' <expr> `:=' <expr>

<layout-sig> ::= <layout-name> `:' <data-name> `\verb|>->|' `layout' `[' <vars> `]' `;'

<layout-def> ::= <layout-case> `;' <layout-def> | <layout-case> `;'

<layout-case> ::= <layout-name> <pattern> `:=' <layout-body>

<layout-body> ::= <heaplet> `,' <layout-body> | <heaplet>

<heaplet> ::= `emp' | <loc> `:->' <var> | <var> <var>

<loc> ::= <var> | `(' <var> + <natural> `)'

<pattern> ::= `(' <constructor-name> <var> `)'

<data-def> ::= `data' <data-name> `:=' <data-alts> `;'

<data-alts> ::= <data-alt> | <data-alt> `|' <data-alts>

<data-alt> ::= <constructor-name> \{ <type> \}
\end{grammar}


\section{Soundness Proof}
\label{sec:soundness-proof}

First, we introduce an operator that will make it more convenient to talk about the conjunction of two SSL propositions:
\[
  (p_1 ; s_1) \otimes (p_2 ; s_2) = (p_1 \land p_2 ; s_1 \sep s_2)
\]

\noindent
Next, we present a lemma regarding this operator that will be used in the soundness proof:
\begin{lemma}[$\otimes$ pairing]\label{thm:otimes-entail}
  If $(\sigma_1, h_1) \models p$ and $(\sigma_2, h_2) \models q$ and $\sigma_1 \subseteq \sigma_2$ and $h_1 \mathbin{\bot} h_2$,\\
  then $(\sigma_2, h_1 \circ h_2) \models p \otimes q$
\end{lemma}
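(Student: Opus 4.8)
The plan is to unfold what it means for a model to satisfy an SSL proposition and then verify the pure and spatial halves of $p \otimes q$ independently. Writing $p = (p_1; s_1)$ and $q = (p_2; s_2)$, we have $p \otimes q = (p_1 \land p_2 ; s_1 \sep s_2)$, and a model $(\sigma, h)$ satisfies an assertion $(\phi; R)$ precisely when $\sigma \models \phi$ and $(\sigma, h) \models R$. Hence it suffices to show (i) $\sigma_2 \models p_1 \land p_2$ and (ii) $(\sigma_2, h_1 \circ h_2) \models s_1 \sep s_2$, where $h_1 \circ h_2$ is well-defined because $h_1 \mathbin{\bot} h_2$.

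The single ingredient both halves need is a store-weakening property: if $(\sigma_1, h) \models \mathcal{R}$ and $\sigma_1 \subseteq \sigma_2$, then $(\sigma_2, h) \models \mathcal{R}$. I would establish this by a routine structural induction on $\mathcal{R}$ (and, for the pure side, on the shape of the pure formula). The point is that the locations and payloads mentioned by $\mathcal{R}$ are computed from store lookups on the free variables of $\mathcal{R}$, all of which lie in $\dom(\sigma_1)$; since $\sigma_1 \subseteq \sigma_2$, the store $\sigma_2$ returns the same values on those variables, so every heaplet, every separating conjunct, and every pure atom evaluates identically under $\sigma_1$ and $\sigma_2$.

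Granting this lemma, part (i) follows because $(\sigma_1, h_1) \models p$ gives $\sigma_1 \models p_1$, whence $\sigma_2 \models p_1$ by weakening, while $\sigma_2 \models p_2$ is read off directly from $(\sigma_2, h_2) \models q$; conjoining yields $\sigma_2 \models p_1 \land p_2$. For part (ii), I instantiate the semantics of $\sep$ with the heap split $h_1 \circ h_2$ (disjoint by hypothesis): the conjunct $(\sigma_2, h_2) \models s_2$ comes straight from $(\sigma_2, h_2) \models q$, and $(\sigma_2, h_1) \models s_1$ follows from $(\sigma_1, h_1) \models s_1$ (a consequence of $(\sigma_1, h_1) \models p$) by store weakening.

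The step I expect to be the crux is the free-variable side condition underlying weakening: I must know that every variable free in $p$ already belongs to $\dom(\sigma_1)$, so that passing to the larger store $\sigma_2$ can neither introduce a fresh interpretation for such a variable nor overwrite an existing one. This is exactly what $(\sigma_1, h_1) \models p$ buys us---a model is only meaningful when its store interprets the assertion's variables---together with $\sigma_1 \subseteq \sigma_2$, which forces agreement on that common domain. Once weakening is justified, combining (i) and (ii) gives $(\sigma_2, h_1 \circ h_2) \models (p_1 \land p_2; s_1 \sep s_2) = p \otimes q$, as desired.
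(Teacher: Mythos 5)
The paper states this lemma without giving any proof at all---it is introduced with ``we present a lemma regarding this operator that will be used in the soundness proof'' and then simply applied---so there is no official argument to compare against. Your proof is correct and is exactly the routine argument the authors evidently took for granted: split $p \otimes q = (p_1 \land p_2;\, s_1 \sep s_2)$ into its pure and spatial halves, establish store-weakening monotonicity ($(\sigma_1, h) \models \mathcal{R}$ and $\sigma_1 \subseteq \sigma_2$ imply $(\sigma_2, h) \models \mathcal{R}$) by structural induction, and use $h_1 \mathbin{\bot} h_2$ to make $h_1 \circ h_2$ well-defined and to witness the heap split for the separating conjunction. You also correctly identify the one point requiring care---that $(\sigma_1, h_1) \models p$ forces the free variables of $p$ to be interpreted in $\dom(\sigma_1)$, on which $\sigma_2$ agrees---which holds under the standard SL semantics the paper adopts, so your write-up in fact supplies a justification the paper omits.
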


\noindent
Now we proceed to soundness.

\soundnessThm*

\begin{proof}
  Proceeding by induction on $e$:
  \begin{itemize}
    \item $e = i$ for some $i \in \mathbb{Z}$.\\
      We have 
      \begin{itemize}
        \item $\Tsem{i}{V,r} = (v == i ; \emp)$
        \item $v \fresh V$
        \item $v = r$
        \item $\EndSigma = \StartSigma \cup \{ (r,i) \}$
        \item $\End{h'} = \Start{h} = \emptyheap$
      \end{itemize}
      Plugging these in, we find that we want to prove
      \[
        (\StartSigma \cup \{ (r,i) \}, \emptyheap) \models (r == i ; \emp)
      \]
      This immediately follows from the rules of separation logic.

    \item $e = b$ for some $b \in \mathbb{B}$. This case proceeds exactly as the previous case.

    \item $e = e_1 + e_2$.
      From the $\Tsem{\cdot}{}$ hypothesis, we have
      \begin{itemize}
        \item $(e_1, V_0) \tstep (p_1, s_2, V_1, v_1)$
        \item $(e_2, V_1) \tstep (p_2, s_2, V_2, v_2)$
        \item $v \fresh V_2$
        \item $v = r$
        \item $\Tsem{e_1 + e_2}{V, r} = (v == v_1 + v_2 \land p_1 \land p_2 ; s_1 \sep s_2)$
        \item $\Tsem{e_1}{V_0,v_1} = (p_1 ; s_1)$
        \item $\Tsem{e_2}{V_1,v_2} = (p_2 ; s_2)$
      \end{itemize}

      Plugging these in, we want to show
      \[
        \EndPair \models (v == v_1 + v_2 \land p_1 \land p_2 ; s_1 \sep s_2)
      \]

      Furthermore, we see that
      \[
        \Tsem{e}{V,r} = (v == v_1 + v_2 ; \emp) \otimes \Tsem{e_1}{V_0,v_1} \otimes \Tsem{e_2}{V_1,v_2}
      \]

      From the {\sc AM} relation hypothesis we have
      \begin{enumerate}
        \item $\EndSigma = \sigma_y \cup \{(r, z)\}$
        \item $z = x' + y'$
        \item $(x, \sigma, \mathcal{F}, h_1) \step (x', \sigma_x, h_1', \mathcal{F}, v_x)$
        \item $(y, \sigma_x, \mathcal{F}, h_2) \step (y', \sigma_y, h_2', \mathcal{F}, v_y)$
        \item $v_1 = v_x$
        \item $v_2 = v_y$
        \item $h = h_1 \circ h_2$
        \item $h' = h_1' \circ h_2'$
      \end{enumerate}

      From the first six items, we can derive
      \[
        (\EndSigma, \emptyheap) \models (v == v_1 + v_2 ; \emp)
      \]

      From the inductive hypotheses, we get
      \begin{itemize}
        \item $(\sigma_x, h_1') \models \Tsem{e_1}{\dom(\sigma_x), v_x}$
        \item $(\sigma_y, h_2') \models \Tsem{e_2}{\dom(\sigma_y), v_y}$
      \end{itemize}

      By Lemma~\ref{thm:otimes-entail}
      \[
        (\sigma_y, h_1' \circ h_2') \models \Tsem{e_1}{\dom(\sigma_x), v_x} \otimes \Tsem{e_2}{\dom(\sigma_y), v_y}
      \]

      and we know that $\sigma_x \subseteq \sigma_y$.

      By Lemma~\ref{thm:otimes-entail}, we conclude
      \[
        (\sigma_y \cup \{(r, z)\}, h') \models (v == v_1 + v_2 ; \emp) \otimes \Tsem{e_1}{\dom(\sigma_x),v_1} \otimes \Tsem{e_2}{\dom(\sigma_y),v_2}
      \]

    \item $e = v$ for some $v \in \Var$.
      Therefore the only two {\sc AM} rules that apply are {\sc AM-Base-Var} and {\sc AM-Loc-Var}. Both
      cases proceed in the same way.

      We know
      \begin{itemize}
        \item $h = h' = \emptyheap$
      \end{itemize}

      So we want to show
      \[
        (\EndSigma, \emptyheap) \models (\verb|true| ; \emp)
      \]
      This trivially holds.

    \item $e = \lowerS{A}{v}$ for some layout $A$ and $v \in \Var$.
      The applicable {\sc AM} rule is {\sc AM-Lower}.
      From the premises of {\sc AM-Lower}, we have
      \begin{itemize}
        \item $h' = \hat{h} \cdot H'$
      \end{itemize}
      where $H'$ comes from applying the layout $A$ to the constructor application expression
      obtained by reducing $e$. As a result, $H'$ exactly fits the specification of
      one of the branches of the $A$ layout.

      $v$ must be associated with some high-level value in $\mathcal{F}$ and $\hat{h}$
      is the part of the heap that is updated when this is reduced. Since all expressions
      are required to be well-typed, $\hat{h}$ must satisfy some collection of heaplets
      in the $A$ layout branch that is associated to $H'$.

      Note that $\Tsem{e}{V,r} = A(v)$

      From these facts, we can conclude
      \[
        \EndPair \models A(v)
      \]

    \item $e = \lowerS{A}{C\; e_1 \cdots e_n}$. This case is similar to the previous case,
      except that we do not need to lookup a variable in the store before dealing with the constructor
      application.

    \item $e = \instantiateS{A,B}{f}(v)$ for some $v \in \Var$.
        The rule {\sc AM-Instantiate} applies here.

        The {\sc S-Inst-Var} rule requires that we satisfy the inductive predicate
        associated to $f$, that is $\mathcal{I}_{A,B}(f)(v)$. Note that the SSL propositions in the definition of
        such an inductive predicate, given by {\sc FnDef}, will always not only explicitly
        describe the memory used in its result (given by the layout $B$) but also the memory used
        in its argument (given by the layout $A$).

        Now, consider that the {\sc AM-Instantiate} rule uses the layout $A$ to the function argument to
        update the heap and uses the $B$ layout to produce the function's result on the heap. This will match the
        inductive predicate associated to $f$ instantiated at the layouts $A$ and $B$, as required.

      %
      %
      %


      \item $e = \instantiateS{A,B}{f}(C\; e_1 \cdots e_n)$.
        This is much like the previous case. The only difference is that the {\sc S-Inst-Constr} rule
        unfolds the inductive predicate $\mathcal{I}_{A,B}(f)(C\; e_1 \cdots e_n)$. The resulting SSL proposition
        will still be satisfied by the model $\EndPair$ given by {\sc AM-Instantiate}.

      \item $e = \instantiateS{B,C}{f}(\instantiateS{A,B}{g}(e_0))$.
        This case combines to instantiates together. The main condition we need to check here is that the
        two instantiate applications use disjoint parts of the heap in $\EndPair$ in {\sc AM-Instantiate}.

        This can be seen to be true by the fact that the resulting heap is built up out of the subexpressions
        using $\circ$, ensuring that the parts of the heap are disjoint.
  \end{itemize}
\end{proof}


\end{document}